\newif\ifarxiv

\arxivtrue

\ifarxiv
\documentclass{article}
\usepackage{arxiv}
\usepackage{amsthm}
\else
\documentclass[10pt,twocolumn,twoside]{IEEEtran}



\fi

\usepackage{generic}
\usepackage{cite}
\usepackage{amsmath,amssymb,amsfonts,todonotes}
\usepackage{algorithmic}
\usepackage{graphicx}
\usepackage{textcomp}
\usepackage{url}
\usepackage{wrapfig}

\newtheorem{theorem}{Theorem}
\newtheorem{lemma}{Lemma}

\newtheorem{remark}{Remark}
\newtheorem{definition}{Definition}

\newtheorem{assumption}{Assumption}

\usepackage{bbm}

\newcommand{\R}{\mathbb{R}}
\newcommand{\N}{\mathcal{N}}
\newcommand{\A}{\mathcal{A}}
\newcommand{\X}{\mathcal{X}}

\newcommand{\C}{\mathcal{C}}
\newcommand{\n}{\{n\}}
\newcommand{\eig}{\text{eig}}
\newcommand{\hC}{\widehat{\mathcal{C}}}

\newcommand{\Tb}{\mathbf{T}}
\newcommand{\Scal}{\mathcal{S}}

\newcommand{\abold}{\bm{\alpha}}
\newcommand{\Vhat}{\widehat{V}}
\newcommand{\Phat}{\widehat{P}}

\DeclareMathOperator*{\argmax}{arg\,max}
\usepackage[ruled,vlined,linesnumbered ]{algorithm2e}
\usepackage{bm}


\begin{document}
\title{Clustered Control of Transition-Independent MDPs}

\author{Carmel Fiscko$^{1}$, Soummya Kar$^{1}$, and Bruno Sinopoli$^{2}$
\thanks{$^{1}$Carmel Fiscko and Soummya Kar are with the Dept. of Electrical and Computer Engineering at Carnegie Mellon University at 5000 Forbes Ave, Pittsburgh, PA 15213. {\tt\small cfiscko@andrew.cmu.edu, soummyak@andrew.cmu.edu}. This material is based upon work supported by the National Science Foundation Graduate Research Fellowship Program under Grant No. DGE1745016. Any opinions, findings, and conclusions or recommendations expressed in this material are those of the author(s) and do not necessarily reflect the views of the National Science Foundation. Additional support from the Hsu Chang Memorial Fellowship in ECE.}%
\thanks{$^{2}$Bruno Sinopoli is with the Dept. of Electrical and Systems Engineering at Washington University in St. Louis, MO at 1 Brookings Dr, St. Louis, MO 63130. {\tt\small bsinopoli@wustl.edu }}%
}

\maketitle

\begin{abstract}
This work studies efficient solution methods for cluster-based control policies of transition-independent Markov decision processes (TI-MDPs). We focus on control of multi-agent systems, whereby a central planner (CP) influences agents to select desirable group behavior. The agents are partitioned into disjoint clusters whereby agents in the same cluster receive the same controls but agents in different clusters may receive different controls. Under mild assumptions, this process can be modeled as a TI-MDP where each factor describes the behavior of one cluster. The action space of the TI-MDP becomes exponential with respect to the number of clusters. To efficiently find a policy in this rapidly scaling space, we propose a clustered Bellman operator that optimizes over the action space for one cluster at any evaluation. We present Clustered Value Iteration (CVI), which uses this operator to iteratively perform ``round robin" optimization across the clusters. CVI converges exponentially faster than standard value iteration (VI), and can find policies that closely approximate the MDP’s true optimal value. A special class of TI-MDPs with separable reward functions are investigated, and it is shown that CVI will find optimal policies on this class of problems. Finally, the optimal clustering assignment problem is explored. The value functions TI-MDPs with submodular reward functions are shown to be submodular functions, so submodular set optimization may be used to find a near optimal clustering assignment. We propose an iterative greedy cluster splitting algorithm, which yields monotonic submodular improvement in value at each iteration. Finally, simulations offer empirical assessment of the proposed methods.

\end{abstract}

\ifarxiv
\else
\begin{IEEEkeywords}
multi-agent systems, markov decision processes, dynamic programming
\end{IEEEkeywords}
\fi
\section{Introduction}
\label{sec:introduction}


Multi-agent systems are ubiquitous in natural and engineered systems, and understanding how the decision processes of the individuals translate to group behavior has been a key goal of research study. Influencers aiming to achieve global control objectives on the system must also characterize how their controls affect agent decision policies, and determine which controls induce the desired long-term group behavior.

Social media, robot swarms, and markets are examples of systems whose dynamics are driven by the decisions of individuals. A person's posts online reflect their own opinions, yet they are influenced by their friends, trends, and current events. These decisions, in turn, affect the path of future posts and trends. Within cooperative robotics, a central objective must be completed, yet each robot's path must consider those of its peers. On the competitive side, businesses weigh their rivals' actions to likewise position themselves to an advantage. Any regulatory power that wishes to achieve control objectives on the system must therefore understand the agents' behavior.


\ifarxiv\else\vspace{-3.5mm}\fi\subsection{Multi-Agent Systems}
Such groups of agents are often modeled as multi-agent systems (MAS), where groups of intelligent entities make decisions while interacting within some environment \cite{stone2000multiagent}. These agents can model applications such as computer networks, smart grids, and cyberphysical systems, in addition to the aforementioned models of social media, robots, and finance \cite{dorri2018multi}. Natural questions arising from these examples include control, security, coordination, and scalability. 

In engineered models of MAS, a common question is how should each agent act optimally to achieve a personal objective within the interconnected system. In a game theoretic model, an individual is endowed with a utility function that rates the goodness of a potential action in the current world \cite{osborne1994course}. Outcomes of games are generally characterized by equilibria, in which agents self-reinforce game-play at the same action, which may or may not have a notion of optimality. 

Multi-agent reinforcement learning (MARL) was born from the field of Markov or stochastic games \cite{shapley1953stochastic}, extending the idea of stochastic repeated play games for agents to learn within a probabilistic environment, as presented in the seminal work \cite{littman1994markov}. Many contributions have investigated how agents may learn policies in this setting under a variety of assumptions, structures, and guarantees of optimality. The objective of the agents can generally be described as cooperative \cite{boutilier1996planning}, competitive \cite{littman1994markov}, or general-sum \cite{littman2001value}. Various agent communication structures have been proposed, ranging from fully centralized \cite{chen2021communication} to decentralized \cite{kar2013cal}. Techniques for agents to learn policies generally fall under value-based methods where agents estimate q-functions \cite{hu2003nash}, or policy-based methods such as fictitious play \cite{brown1951iterative}.

This objective of global control of a multitude of heterogeneous networked agents has been widely explored in domain-specific applications. MARL \cite{yang2019two} and deep reinforcement learning (RL) \cite{gao2021consensus} have been used to address the management of power systems distribution networks. In economic applications, MARL has been used to analyze market makers providing liquidity \cite{ganesh2019reinforcement}, and creating tax policies that balance equality and productivity \cite{zheng2020ai}. A few social objectives that have been considered include the issues of autonomous vehicle adoption \cite{bonnefon2016social} and responses to climate change \cite{janssen1998battle}. MAS theory has also been used to propose traffic signal control \cite{arel2010reinforcement} and scheduling of taxis \cite{glaschenko2009multi}. 

In this work, we consider a MAS with agents who have learned policies. These policies may be modeled formally via game theory or MARL, or they may simply be learned via frequentist estimation of observed agent behavior. Depending on the guarantees of the found policies, if any exist, the long-term behavior of the MAS may result in an equilibrium at one state, cyclical behavior between a subset of states, or some stationary distribution over the entire state space.

\ifarxiv\else\vspace{-3mm}\fi\subsection{Transition-Independent MDPs}
If the agents sample their next states and actions independently given the current state and action, then the transition kernel of the MDP may be expressed as the product of the individual agent transitions. MDPs that satisfy this structure are called TI-MDPs \cite{becker2003transition}. If in addition the reward function is additive with respect to agents, the model is called a factored MDP \cite{osband2014near}. These models have been widely studied for their structural properties and complexity reduction over standard MDPs. Solutions for this formulation include approximate policy iteration \cite{koller2013policy}, graphical methods \cite{sallans2004reinforcement}, as well as approximate transition independence \cite{sahabandu2021scalable}. Efforts to reduce the scope of policy search include efficient policy sampling methods \cite{osband2014near}, relaxations to deterministic policies \cite{ng2013pegasus}, and hierarchical methods \cite{dietterich2000hierarchical}. 

\ifarxiv\else\vspace{-3mm}\fi\subsection{Controlled MAS}

\textbf{The objective in this work is to characterize the ability of a central planner (CP) to change this long-term behavior exhibited by the agents.} The CP takes advantage of the learned agent policies in that these policies are dependent on the actions of the agents, any environmental states, and the CP signal; the CP can thus select controls to increase the probability that the agents sample ``good" future actions. There are two levels of decision-making: each agent chooses actions to accomplish the agent's goal, and the CP influences the agents by making certain actions seem ``better" and thus more likely to be chosen. The goal of the CP may be aligned with, or completely different from, the goals of the agents. This model aims to characterize mathematically the role of CPs in MAS, entities that can affect the decision of a large number of agents. These CPs may be policy makers, e.g. large social networks like Twitter and Facebook, or single individuals who are followed by many users.

Under assumptions of Markovian dynamics, we model the controlled MAS as a Markov decision process (MDP) \cite{Fiscko2019ControlOP}. The difference between this formulation and MARL is that the goal of our CP is to control system behavior given the agents' \emph{a priori} learned, i.e. time-homogeneous, policies; MARL solves the question of how agents should simultaneously learn their policies. This formulation is similar to a MARL model where $N-1$ agents have fixed but possibly non-optimal policies and the $N^{\text{th}}$ agent learns a policy under global visibility properties. This formulation enables flexibility as the CP is agnostic to the specific process by which each agent learned their policy: agents may be competitive/cooperative/general-sum, acting optimally or sub-optimally, highly/sparsely connected, using policy- or value-based methods, or a heterogeneous mix of all of the above. This flexibility motivates defining a MDP from the CP's perspective.

\ifarxiv\else\vspace{-3mm}\fi\subsection{Cluster-Based Control Policies}
We explored the controlled MAS problem first in the scenario where all agents received the same broadcast signal from the CP \cite{Fiscko2019ControlOP}. Next, the problem was generalized to the case where the CP could broadcast unique signals to each agent \cite{fiscko2021efficient}. In that work, we note that finding agent-personalized policies becomes computationally infeasible as the problem size scales; our proposed solution is to partition the agents into sets and have the CP design control policies per-cluster rather than per-agent.



Clustering techniques are of importance in many engineering fields for their scale and control properties, such as in wireless sensor networks \cite{abbasi2007survey} and swarm robotics \cite{dos2012distributed}.


Here we emphasize that we take a node-based definition of the clusters rather than an edge-based grouping such as \cite{malliaros2013clustering}; agents in the same cluster receive the same signals from the CP, but agents in different clusters may receive different signals. The clustering motivation is twofold: first is that natural cluster definitions may appear in the application, and a practitioner may want to personalize controls for each type in accordance to system goals and constraints. Secondly, more degrees of freedom in control can yield better performance in terms of value and computation speed, even amongst agents for whom no clear partitions exist. 


If no cluster assignment is given \emph{a priori}, we desire an optimal clustering that maximizes the CP's value function subject to a desired number of clusters. While the general form of this optimization problem is known to be NP hard, approximations have been studied across many applications. Variants have been studied \cite{zhao2005greedy}, but these are concerned with minimization with non-negativity constraints preventing analysis of maximization problems. Equivalently, the problem may be stated as an objective subject to a matroid constraint, for which there exists a $(1-1/e)$ approximation  \cite{calinescu2007maximizing}.

\ifarxiv\else\vspace{-3mm}\fi\subsection{Our Contributions}
In this work study the \emph{cluster-based TI-MDP control problem}, both in terms of efficient policy computation and the clustering assignment. We introduce Clustered Value Iteration (CVI), an approximate ``round robin" method for solving for cluster-based MDP policies whose per-iteration complexity is independent of the number of clusters, whereas standard VI has exponential dependence. We show that the value function of TI-MDPs with separable reward functions yields a separable structure that enables CVI to find optimal policies for this class of problems. Next, we investigate an approximate method for the optimal cluster assignment problem of the agents. We show that TI-MDPs with submodular reward functions have submodular value functions, allowing agent assignment to be solved as a submodular set optimization problem. We propose an iterative greedy splitting method which gives a structured search to the clustering problem with provable improvement in value. The optimal values of these suboptimal clustering assignments are shown to form a submodular lower bound for the optimal values of the optimal clustering assignments.

We outline the problem formulation in Section \ref{probelm statement}. Section \ref{CVI section} describes CVI and its convergence and complexity properties. Section \ref{separable section} explores TI-MDPs with separable reward functions. The clustering assignment problem and ideas of submodularity are discussed in Section \ref{cluster section}. Finally, Section \ref{sims} demonstrates the proposed algorithms on a variety of simulations. Due to page limitations, some proofs are omitted and can be found in the extended version \cite{fiscko2022cluster}.

\ifarxiv\else\vspace{-4mm}\fi\subsection{Comparison to Other Behavior Control Paradigms}

Clustered policies are thematically related to state aggregation methods \cite{bertsekas1995dynamic}, but the technical formulations are distinct. In state aggregation, similar states are grouped together to form a new system with a reduced state space and weighted transitions. In comparison, the clustered formulation works for applications with partitioned substates and factored transitions.

The objective of changing the behavior of a MAS has been studied in other domains such as economic applications, where a business may desire an optimal outcome, or a regulator may desire maximal welfare. This is studied in mechanism design \cite{groves1973incentives},  \cite{roughgarden2010algorithmic}, where an economist may design agent interaction properties like information and available agent actions to induce a desired outcome. In comparison, this paper considers a controller acting on an existing system, thus likewise inheriting the existing infrastructure and constraints.

Incentive design is also related \cite{bacsar1998dynamic}, where rewards are offered to agents to encourage them to select different actions. Incentive design can be connected with Stackelberg games \cite{sanjari2022incentive}, whereby a leader offers rewards and the agents react accordingly. Intervention design is another related paradigm that controls networked agents to induce a desired effect. For example, \cite{ballester2006s} considers removing an agent from a system to optimally improve aggregate network activity. Similar network-based interventions aim to maximize system welfare via designed incentives \cite{galeotti2020targeting}, and subject to random graphs \cite{parise2019graphon}. In comparison, the work in this paper does not model the utility functions of the agents, and only assumes the agent policies are Markovian and conditionally independent given the state and CP signal. These mild assumptions mean that the MDP formulation can model systems whose dynamics evolve as the result of a game, but the control results are not based on properties like utility structure. In addition, the techniques we present are not constrained to a specific structure of network influence, i.e. aggregation, and we do not presuppose a method for CP influence beyond changing probabilities, i.e. we do not assume a specific incentive model.

\section{Problem Statement} \label{probelm statement}
\subsection{Multi-Agent MDPs}
Consider a multi-agent system modeled as a \emph{Markov Decision Process} (MDP) $\mathcal{M}=(\X, \A, R, T, \gamma)$ consisting of the state space of the system $\X$, the action space of the CP $\A$, a reward function of the CP $R:\X\times\A\to\mathbb{R}$, a probabilistic state-action to state transition function $T:\X\times\A \to\Delta(\X)$, and a discount parameter $\gamma\in(0,1)$.

The MDP must next be related to the MAS. Consider a finite set of agents $\N = \{1,\dots, N\}$. At any instance, each agent $n\in\N$ may be associated with some environmental state $s_n\in S_n$ and/or a personal action $a_n\in A_n$. The overall behavior of each agent is denoted by $x_n=\{s_n, a_n\}$ with space $X_n=S_n\times A_n$. A state of the MDP is thus the behavior across all the agents: $x=\{x_1,\dots, x_N\}$, and the state space is $\X = \bigotimes_{n\in\N}X_n$. In general, a state written with a set subscript such as $x_{\mathcal{B}}$ refers to the actions realized in state $x$ by the agents in $\mathcal{B}$, i.e. $x_{\mathcal{B}} = \{x_b|b\in\mathcal{B}\}$, and the notation $- \tilde{n}$ will refer to the set $\{{n}| {n}\in \N, {n}\neq \tilde{n}\}$. 




In this work we consider a \emph{clustered formulation}, in which the agents are partitioned into $C$ disjoint clusters and controls are transmitted to each cluster. A \emph{clustering assignment} of the agents is the set of sets $\C = \{\C_1,\dots, \C_C\}$ such that $\C_u\cap \C_v = \emptyset\ \forall u,v \in [1,C], u\neq v$ and $\cup_{\C_c\in \C} = \N$. The cluster assignment of agent $n$ will be written as $\C(n)$.

In this setup, the CP can broadcast a unique signal $\alpha_c\in\A_c$ to each cluster where $\A_c$ is a finite set of options. The overall control space is thus $\A = \bigotimes_{c\in\C} \A_c$ where one control is $\abold = \{\alpha_1,\dots,\alpha_C\}$. Bold $\abold$ will always refer to the vector of controls across the clusters, and $\alpha$ will refer to an element within the vector. For ease of notation, the CP action seen by agent $n$ will be denoted by $\alpha_n$ instead of $\alpha_{\C(n)}$. %

The next element of the MDP is the \emph{state transition function}, which defines the state-action to state densities in the form $p(x'|x,\abold)$. By design, the agents in some cluster $c$ only see the action assigned to their cluster, $\alpha_c$. Given the agent decision processes, further structure on the transitions may be inferred.

\begin{definition}
An \emph{agent policy} $\omega_n: \X\times \A_{\C(n)}\to\Delta(A_n)$ describes the decision-making process of an agent by defining a distribution over the agent's next actions given the current system state and CP's signal to the agent.
\end{definition}

\begin{definition}
If there exist environmental states, such as in the MARL framework, then the \emph{agent environmental transition} $\psi_n: \X\times\A_{\C(n)}\to\Delta(S_n)$ defines a distribution over the agent's next environmental state given the current system state and the CP's signal to the agent.
\end{definition}

\begin{definition}
The \emph{agent behavior} $\phi_n: \X\times\A_{\C(n)}\to\Delta(\X_n)$ defines the system state-action to agent state transitions. If there are no environmental states, then the agent behavior is equivalent to the agent policy. If there are environmental states, then the agent behavior is the product of the agent's policy and its environmental transition:
\ifarxiv
\begin{equation}
\begin{gathered}
    p(x_n^{t+1}|x^t,\alpha_n^t)=p(s_n^{t+1}, a_n^{t+1}|s^t,a^t,\alpha_n^t)=p(a_n^{t+1}|s_n^{t+1},s_n^t,a_n^t,\alpha_n^t)p(s_n^{t+1}|s^t,a^t,\alpha_n^t)\\
    =p(a_n^{t+1}|s_n^t,a_n^t, \alpha_n^t)p(s_n^{t+1}|s^t,a^t,\alpha_n^t)=\omega_n^t\psi_n^t
\end{gathered}
\end{equation}
\else
\begin{align*}
    p(x_n^{t+1}|x^t,\alpha_n^t)&=p(s_n^{t+1}, a_n^{t+1}|s^t,a^t,\alpha_n^t)\\
    &=p(a_n^{t+1}|s_n^{t+1},s_n^t,a_n^t,\alpha_n^t)p(s_n^{t+1}|s^t,a^t,\alpha_n^t)\\
    &=p(a_n^{t+1}|s_n^t,a_n^t, \alpha_n^t)p(s_n^{t+1}|s^t,a^t,\alpha_n^t)\\
    &=\omega_n^t\psi_n^t
\end{align*}
\fi
\end{definition}

\begin{assumption} \label{as: agent behavior}
The agents' policies, environmental transitions, and behaviors are Markovian and time-homogeneous:

\ifarxiv
\begin{equation}
    p(x_n^{t+1}|x^t,\dots, x^0, \alpha_n^t,\dots,\alpha_n^0)=p(x_n^{t+1}|x^t,\alpha_n^t),\ \forall\  \alpha_n\in \A_{\C(n)},\ x_n\in X_n,\ n\in \N,\ x\in\X,\  t\geq 0.
\end{equation}
\else
\small\begin{align}
    &p(x_n^{t+1}|x^t,\dots, x^0, \alpha_n^t,\dots,\alpha_n^0)=p(x_n^{t+1}|x^t,\alpha_n^t),\\
    &\forall\  \alpha_n\in \A_{\C(n)},\ x_n\in X_n,\ n\in \N,\ x\in\X,\  t\geq 0.\nonumber
\end{align}
\normalsize
\fi
Furthermore, each agent's policies, environmental transitions, and behaviors are independent of the CP actions assigned to the other agents:

\ifarxiv\else\vspace{-2mm}\fi\ifarxiv
\begin{equation}
    p(x_n'|x,\abold)=p(x_n'|x,\alpha_n),\ \forall\ \alpha_n\in \A_{\C(n)},\ x_n\in X_n,\ n\in \N,\ x\in\X,\  t\geq 0. 
\end{equation}
\else
\small\begin{equation}
\begin{gathered}
    p(x_n'|x,\abold)=p(x_n'|x,\alpha_n),\\
    \forall\ \alpha_n\in \A_{\C(n)},\ x_n\in X_n,\ n\in \N,\ x\in\X,\  t\geq 0. \nonumber
\end{gathered}
\end{equation}
\normalsize

\fi
\end{assumption}

These decision making processes are assumed to follow the standard Markov property. Furthermore, the time homogeneity property means that the agents have learned their decision processes \emph{a priori}, be it from MARL, game theory, or another paradigm. The CP is agnostic to the learning processes used by the agents as long as they satisfy the Markov and time-homogeneity assumptions.

These are the key assumptions enabling the CP to take advantage of the agents' \emph{learned} behaviors. While the set of agent behavioral distributions are fixed over time, the CP changes $\abold$ to change the distributions from which the agents sample, thus also changing the sequence of realized $x$. In this way the agents do not learn \emph{new} behavior, but react to the CP with their \emph{existing} behavior.

MDPs that satisfy Assumption \ref{as: agent behavior} have the following structure in their transition kernels.

\begin{definition}\label{TI}
A \emph{transition independent MDP} (TI-MDP) is a MDP whose state transition probabilities may be expressed as the product of independent transition probabilities,
\begin{align}
    p(x'|x,\abold) = \prod_{i\in\mathcal{I}} p(x_i'|x,\alpha_i).\label{factored structure}
\end{align}
The TI-MDP is ``agent-independent" if  $\mathcal{I}$ is equivalent to the set of agents, and ``cluster-independent" if  $\mathcal{I}$ is equivalent to the set of clusters. Note that factorization across the agents implies factorization across the clusters.
\end{definition}

TI-MDPs are closely related to \emph{factored MDPs}, which also have a structure such as \eqref{factored structure} assumed on their transition matrix \cite{guestrin2003efficient}, \cite{osband2014near}. Factored MDPs are usually derived as the result of a dynamic Bayesian network (DBN) such that: $p(x'|s,\abold) = \prod_{n\in \N}p(x_n'|x_{u(n)}, \alpha_n)$ where $u(n)$ is the set of parents of agent $n$ in the graph. In addition, factored MDPs defines a separable reward as the sum of local reward functions scoped by the same graph $r(s,\abold) = \sum_{n\in\N}r_n(x_{u(n)},\alpha_n)$. 

In the rest of Section \ref{probelm statement} and Section \ref{CVI section} we use the TI-MDP structure as in Definition \ref{TI} with no assumptions on the structure of the reward function. Section \ref{separable section} will investigate the assumption of separability of the reward function.

Definition \ref{TI} describes MASs where each agent makes their decision independently after observing the current state and control. This includes general non-cooperative games \cite{osborne1994course}. For example, in best response style game-play, each player $n$ chooses an action that maximizes their utility function $u$ based on the last round of game-play, i.e. $a_n' \in \argmax u(a,\alpha_n, a_n')$. Knowledge of the players' utilities allows the controller to model each agent's decision process  as a probability function $p(a_n'|a,\alpha_n)$. Therefore, any overall action transition $a\to a'$ can be described as the product of $N$ independent factors that are each conditioned on $a$ and $\abold$. In comparison, communication between agents after observing $(a,\abold)$ and before setting a probability distribution over $a_n'$ may introduce dependence between agents; however, disjoint communication sets may still enable transition independence between clusters.

Finally, the reward function of the MDP must be defined.

\begin{assumption}\label{as:reward}
The reward function $r(x,\abold)$ is non-negative, deterministic, and bounded for all $x\in\X$, $\abold\in\A$.
\end{assumption}

Assumption \ref{as:reward} means that the CP has constructed their goal \emph{a priori}, and can encode this known objective into the reward function. For example, an indicator reward can place a reward of one on desirable states.

Finally, the CP may solve the MDP for some policy  $\Pi:\X\to\A$. In this work we consider the standard \emph{value function}:

\ifarxiv\else\vspace{-2mm}\fi\small
\ifarxiv
\begin{equation}
    V^{\Pi}(x) = \mathbb{E}\left[\sum_{k=0}^{\infty}\gamma^k r(x_k,\abold_k)\vert x_0=x, \abold_k\sim \Pi(x_k), x_{k+1}|x_k,\abold_k\sim T \right].
\end{equation}
\else
\begin{equation}
\begin{split}
    &V^{\Pi}(x) = \mathbb{E}\Big[\sum_{k=0}^{\infty}\gamma^k r(x_k,\abold_k)\\
    &\qquad\qquad\qquad\qquad\big\vert x_0=x, \abold_k\sim \Pi(x_k), x_{k+1}|x_k,\abold_k\sim T \Big].
\end{split}
\end{equation}
\fi

\normalsize

For brevity, the notation $V\in\R^{|\X|}$ will refer to the vector of values $V(x)\ \forall\ x\in\X$. The policy notation refers to the tuple $\Pi(x)=\{\pi_1(x),\dots,\pi_C(x)\}$ where $\Pi(x) \in\A$ and $\pi_c(x)\in \A_c$. An optimal policy $\Pi^*$ is one that maximizes the value function, $\Pi^*(x) \in \argmax_{\Pi} V^{\Pi}(x)$. The optimal value $V^*(x)$ is known to be unique, and for finite stationary MPDs there exists an optimal stationary deterministic policy \cite{bertsekas1995dynamic}.
\begin{definition}
The \emph{Bellman operator} $\Tb$ applied to the value function $V(x)$ is,
\begin{align}
    \Tb V(x) &=  \max_{\abold\in \A} \mathbb{E}\left[ r(x,\abold)+\gamma V(x') \Big\vert x'\vert x,\abold\sim T \right]\label{bellman}.
\end{align}
\end{definition}

The optimal value function satisfies $\Tb V^*(x) = V^*(x)$.

An immediate result of the clustered MDP setup is that a better optimal value is attainable if $C>1$, as this increases the degrees of freedom for the control. This idea is formalized in the following lemma.

\begin{lemma}\label{le:improvement}
Consider a MDP whose agents have been partitioned into $C$ clusters of arbitrary agent assignment. The optimal values achieved from $C=1$ and $C=N$ are the lower and upper bounds on the optimal values achieved by arbitrary $C$. 

\ifarxiv\else\vspace{-3mm}\fi
\begin{equation}
    V^*_{C=1}(x) \leq V^*_{1<C<N}(x) \leq V^*_{C=N}(x).
\end{equation}
\end{lemma}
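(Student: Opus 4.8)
The plan is to prove the stronger statement that \emph{refining} a clustering never decreases the optimal value, and then read off the lemma by comparing the given $C$-clustering with the two extreme partitions. Call a clustering $\C'$ a \emph{refinement} of $\C$ if every $c\in\C$ is a disjoint union of elements of $\C'$; equivalently, each $c'\in\C'$ is contained in a (unique) $c\in\C$. The one-cluster partition $\C=\{\N\}$ is refined by \emph{every} clustering, and the $N$-cluster singleton partition refines every clustering, so both inequalities in the statement follow once we show: if $\C'$ refines $\C$, then $V^{*}_{\C}(s)\le V^{*}_{\C'}(s)$ for all $s\in\Scal$.

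The argument is a policy-embedding: any control available under the coarser clustering is also available, with identical effect, under the finer one. Given an optimal clustered policy $\Pi^{*}_{\C}=\{\pi_c\}_{c\in\C}$, define a policy $\Pi'$ for $\C'$ by having each fine cluster $c'\subseteq c$ copy its parent's control, i.e. $\Pi'$ assigns $\pi_c(s)$ to $c'$ at every state $s$. Then under $\Pi'$ each agent $n$ sees, at every state, exactly the control $\abold_n$ it would have seen under $\Pi^{*}_{\C}$. Since the MDP is transition independent, $p(s'\mid s,\abold)=\prod_{n\in\N}p(s'_n\mid s,\abold_n)$ depends on the action only through the per-agent controls, and the reward $r(s,\abold)$ likewise depends only on $s$ and those per-agent controls (this is exactly the meaning of ``agents in cluster $c$ only see the action $\alpha_c$''); hence $\Pi'$ on $\C'$ and $\Pi^{*}_{\C}$ on $\C$ induce the same controlled Markov chain and the same stage rewards from every initial state. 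Therefore $V^{\Pi'}_{\C'}(s)=V^{\Pi^{*}_{\C}}_{\C}(s)=V^{*}_{\C}(s)$ for all $s$, and since $V^{*}_{\C'}(s)\ge V^{\Pi'}_{\C'}(s)$ we conclude $V^{*}_{\C'}(s)\ge V^{*}_{\C}(s)$.

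Finally I would instantiate the claim twice: with $\C=\{\N\}$ and $\C'$ the given $C$-clustering, which gives $V^{*}_{C=1}(s)\le V^{*}_{1<C<N}(s)$; and with $\C$ the given $C$-clustering and $\C'$ the singleton partition, which gives $V^{*}_{1<C<N}(s)\le V^{*}_{C=N}(s)$. Chaining these yields the displayed inequalities. One could equivalently argue through the Bellman operators, noting that the finer clustering's operator dominates the coarser one's pointwise because it maximizes the same expression over a superset of effective actions, then iterating it from $V^{*}_{\C}$ and invoking monotonicity and contraction of $\Tb$; this gives the same conclusion.

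The step I expect to need the most care is the innocuous-sounding claim that the transition and reward ``depend on the action only through the per-agent controls.'' For the transition this is immediate from the TI-MDP definition, but it must be made precise so that the policy-copying construction is well defined across clusterings whose action spaces $\A$ are nominally different objects; for the reward it is implicit in the modeling assumption that agents respond only to their own cluster's control, and it is worth stating explicitly since the whole comparison hinges on it.
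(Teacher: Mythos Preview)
Your argument is correct. The paper itself defers this lemma's proof to a prior work, but the same idea appears later as Lemma~\ref{splitting} (Monotonicity of Splitting), where the authors take the Bellman-operator route you mention at the end: they note that maximizing over $(\alpha_{c_1'},\alpha_{c_2'})$ includes the diagonal $\alpha_{c_1'}=\alpha_{c_2'}=\alpha_{c'}$, so the split's value dominates, and then induct. Your refinement/policy-embedding formulation is a slightly cleaner and more general packaging of the same observation---it handles arbitrary refinements in one stroke rather than single splits by induction---and your closing caveat about rewards depending on the action only through per-agent controls is exactly the right thing to flag, since the paper relies on this implicitly (``agents in some cluster $c$ only see the action assigned to their cluster'') without isolating it as a formal hypothesis.
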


\begin{proof}
See \cite{fiscko2021efficient}.
\end{proof}

This initial result shows that any arbitrary clustering assignment can improve the optimal value of the MDP. This finding motivates further investigation into clustered policies.

\section{Solving for a Clustered Control Policy} \label{CVI section}
The objective is to attain the improved value with a clustered-based control, but this can only be done if an acceptable policy is found. Standard solution techniques like VI and PI rely on the Bellman operator \eqref{bellman}, which maximizes over the entire action space. The cluster-based policy formulation dramatically increases the size of the action space, meaning that exhaustive search methods become intractable. For example, let each $|\A_c|=M$; then $\abold$ has  $|\A|=M^C$ possible options. PI has a complexity of $O(|\A| |\X|^2+|\X|^3)$ per iteration, and VI has a complexity of $O(|\A| |\X|^2)$ per iteration \cite{littman2013complexity}.  The computation time for either method grows exponentially with the number of clusters. This problem is exacerbated in regimes when the model is unavailable, as methods like Q-learning and SARSA only guarantee optimality with infinite observations of all state-action pairs. The conclusion is that an alternate method for solving for a cluster-based $\Pi^*$ is needed.

In this work we take advantage of the structural properties of TI-MDPs and present a clustered Bellman operator that only optimizes the action for one cluster at each call. This operator is used in an algorithm called Clustered Value Iteration (CVI) as shown in Algorithm \ref{CVI}. This algorithm differs from standard VI in that it takes a ``round-robin" approach to optimize the actions across all the clusters. While one the action for one cluster is being optimized, the actions across the other clusters are held constant until their turn to be improved at future iterations. The intuition is that the controller can improve cluster-by-cluster with greatly reduced search time.

\ifarxiv\else\vspace{-2mm}\fi\subsection{Algorithm Statement}

There exists an optimal stationary deterministic policy for this MDP, so our focus is constrained to deterministic policies $\Pi$. Let $\pi_c^k(x) = \alpha_c^k$ denote the action to cluster $c$ at iteration $k$ for state $x$. The vector across all clusters is $\Pi^k(x) = \abold^k$. With some new notation, the Bellman operator can be defined with respect to one element of $\abold$.

\begin{definition}
Let $\Pi_{- c}(x) = \abold_{- c}$ be the tuple of $\alpha$ without the element specified by the subscript.
\begin{align}\small
    \Pi_{- c}(x) &\triangleq \{\pi_1(x),\dots, \pi_{c-1}(x),\pi_{c+1}(x),\dots, \pi_C(x)\},\\
    \abold_{- c} &\triangleq \{\alpha_1,\dots,\alpha_{c-1},\alpha_{c+1},\dots,\alpha_C\}.
\end{align}
\end{definition}

\begin{definition}
For some $\Pi(x)$, $V\in\R^{|\X|}$, and $x\in\mathcal{X}$, define $\Tb^{c}_{\Pi}:\R^{|\X|}\to\R$ as the \emph{clustered Bellman operator} that optimizes $\alpha_c\in\A_c$ for fixed $\Pi_{- c}(x)$.
\ifarxiv
\small\begin{align} 
    \Tb^{c}_{\Pi}V(x) &\triangleq \max_{\alpha_c\in\A_c}\mathbb{E}\left[r(x,\{\Pi_{- c}(x),\alpha_c\})+\gamma V(x')\Big\vert x'\vert x,\{\Pi_{- c}(x),\alpha_c\}\sim T\right],\nonumber\\
    &=\max_{\alpha_c\in\A_c}\sum_{x'\in\X} p(x'|x,\{\Pi_{- c}(x),\alpha_c\})(r(x,\{\Pi_{- c}(x),\alpha_c\})+\gamma V(x'))\label{cvibell}.
\end{align}
\else
\small\begin{align} 
    \begin{split}
        &\Tb^{c}_{\Pi}V(x) \triangleq \max_{\alpha_c\in\A_c}\mathbb{E}\big[r(x,\{\Pi_{- c}(x),\alpha_c\})+\gamma V(x') \\
        &\qquad \qquad \qquad \qquad \qquad   \Big\vert x'\vert x,\{\Pi_{- c}(x),\alpha_c\}\sim T\big],
    \end{split}
    \nonumber\\
    \begin{split}
        &=\max_{\alpha_c\in\A_c}\sum_{x'\in\X} p(x'|x,\{\Pi_{- c}(x),\alpha_c\})\\
        &\qquad\qquad\qquad\qquad\qquad\times(r(x,\{\Pi_{- c}(x),\alpha_c\})+\gamma V(x')).
    \end{split}
    \label{cvibell}
\end{align}
\fi
\end{definition}
\begin{definition}
For some $\Pi(x)$, $V\in\R^{|\X|}$,  and $x\in\mathcal{X}$, define $\Tb_{\Pi}:\R^{|\X|}\to\R$ as the \emph{clustered Bellman evaluation},

\ifarxiv\else\vspace{-2mm}\fi\small\begin{align}\label{eval}
    \Tb_{\Pi}V(x) &\triangleq \mathbb{E}\left[r(x,\Pi(x))+\gamma V(x')\Big\vert x'\vert x,\Pi(x)\sim T\right],\nonumber\\
    &= \sum_{x'\in\X}p(x'|x,\Pi(x))(r(x,\Pi(x))+\gamma V(x')).
\end{align}
\end{definition}

\normalsize
To enable the round-robin optimization, the order of the cluster optimization must be defined. It will generally be assumed in this paper that the following statement holds. 

\begin{assumption} \label{clusters}
The cluster optimization order $\Omega$ is a permutation of $\{1,\dots, C\}$.
\end{assumption}

The proposed algorithm begins by initializing the value of each state to zero. For each iteration $k$, the algorithm selects a cluster according to the optimization order, and the clustered Bellman operator $\Tb_{\Pi^k}^{c_k}$ is applied to the current value estimate $V_k(x)$. In the clustered Bellman operator, all clusters except for $c_k$ follow the fixed policy $\Pi^k_{- c_k}$, while the action for $c_k$ is optimized. The policy for $c_k$ is updated accordingly. A tie-breaking rule for actions is assumed for when $\argmax \Tb_{\Pi^k}^{c_k}V_{k}(x)$ has more than one element, such as choosing the $\alpha_{c_k}$ with the smallest index. 

\begin{algorithm}
\SetAlgoLined
 $V_{0}(x)\gets 0,\ \forall x\in\X$\;
 Initialize policy guess $\Pi^0(x) = \{\pi_{1}^0(x),\dots,\pi_{C}^0(x)\}, \forall x\in\X$\;
 Choose cluster optimization order $\Omega$\;
 $k=-1$\;
 
 \While{$ \|V_{k}-V_{k-1}\|_{\infty}>\epsilon$}{
  $k=k+1$\;
  $c_k\gets\Omega(k\text{ mod }C)$\;
  $V_{k+1}(x)\gets \Tb^{c_k}_{\Pi^k}V_k(x),\ \forall x\in\X$\;
  $\pi_{c_k}^{k+1}(x) \gets \argmax_{\alpha_{c_k}}\Tb^{c_k}_{\Pi^k}V_k(x),\ \forall x\in\X$\;
  $\Pi_{- c_k}^{k+1}(x) \gets \Pi_{- c_k}^k(x)$\;
 }
 \caption{Clustered Value Iteration (CVI)}
 \label{CVI}
\end{algorithm}

The cluster-based approach of CVI improves computation complexity versus standard VI as the optimization step searches over a significantly smaller space. Again, let each $\alpha_c$ have $M$ possible actions. Standard VI on cluster-based controls has a complexity of $O(M^C |\X|^2)$ per iteration; in comparison, CVI iterations have a complexity of $O(M |\X|^2)$, thus eliminating the per-iteration dependence on the number of clusters. While there will be an increase in the number of clusters, the savings per iteration will make up for the overall computation time as explored in simulation in Section \ref{sims}.

\ifarxiv\else\vspace{-4mm}\fi\subsection{Convergence}
This section studies the properties of the CVI algorithm. The first main theorem below establishes convergence.

\begin{theorem}\label{mainthm} \textbf{Convergence of CVI.} 
    Consider the CVI Algorithm (Algorithm \ref{CVI}) for a MDP such that Assumptions \ref{as: agent behavior}, \ref{as:reward}, and \ref{clusters} are fulfilled. Then, there exists, $\Vhat\in\mathbb{R}^{|\X|}$ such that $V_{k}\rightarrow\Vhat$ as $k\rightarrow\infty$.
\end{theorem}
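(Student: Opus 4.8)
The plan is to analyze the sequence $\{V_k\}_{k\ge 0}$ defined by the CVI recursion $V_{k+1}=\Tb^{c_k}_{\Pi^k}V_k$ (equivalently, Algorithm~\ref{CVI} run with $\epsilon=0$ so the sequence is infinite) and to show it is coordinatewise monotonically non-decreasing and uniformly bounded above. Since $\Scal$ is finite, each coordinate is then a bounded monotone real sequence, hence convergent, and the vector of limits is the desired $\Vhat\in\R^{|\Scal|}$, with $V_k\to\Vhat$ in $\|\cdot\|_\infty$. Notably this uses only Assumption~\ref{reward}; Assumption~\ref{clusters} is carried as a hypothesis but not needed for convergence to \emph{some} limit.

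The single structural fact I would extract first is that every iterate is a one-step policy backup of the previous one. At iteration $k-1$ the algorithm sets $V_k=\Tb^{c_{k-1}}_{\Pi^{k-1}}V_{k-1}$, makes $\pi^k_{c_{k-1}}$ the maximizer in that operator, and copies $\Pi^k_{\setminus c_{k-1}}=\Pi^{k-1}_{\setminus c_{k-1}}$; hence the joint action attaining the maximum is exactly $\Pi^k(s)$, so for every $s$ and every $k\ge 1$,
\[
V_k(s) \;=\; \sum_{s'} p\big(s'|s,\Pi^k(s)\big)\Big( r\big(s,\Pi^k(s)\big) + \gamma\, V_{k-1}(s') \Big).
\]
That is, $V_k$ equals the $\Pi^k$-backup of $V_{k-1}$, where $\Pi^k$ is the (deterministic, tie-broken) policy held after iteration $k-1$.

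Next I would prove $V_{k+1}\ge V_k$ componentwise by induction on $k\ge 0$. Base case: $V_0\equiv 0$ and, since $\sum_{s'}p(s'|s,\cdot)=1$ and $r$ does not depend on $s'$, $V_1(s)=\max_{\alpha_{c_0}} r\big(s,\{\Pi^0_{\setminus c_0}(s),\alpha_{c_0}\}\big)\ge 0=V_0(s)$ using non-negativity of $r$ (Assumption~\ref{reward}). Inductive step, $k\ge 1$: assuming $V_k\ge V_{k-1}$, lower-bound the maximum in $V_{k+1}(s)=\Tb^{c_k}_{\Pi^k}V_k(s)$ by the feasible choice $\alpha_{c_k}=\pi^k_{c_k}(s)$, which gives
\[
V_{k+1}(s) \;\ge\; \sum_{s'} p\big(s'|s,\Pi^k(s)\big)\Big( r\big(s,\Pi^k(s)\big) + \gamma\, V_k(s') \Big);
\]
subtracting the structural identity for $V_k(s)$ leaves $V_{k+1}(s)-V_k(s)\ge \gamma\sum_{s'}p(s'|s,\Pi^k(s))\big(V_k(s')-V_{k-1}(s')\big)\ge 0$ by the inductive hypothesis and $p,\gamma\ge 0$. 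For the upper bound, set $R_{\max}\triangleq\sup_{s,\abold}r(s,\abold)<\infty$ (Assumption~\ref{reward}) and show $0\le V_k(s)\le R_{\max}/(1-\gamma)$ for all $k,s$ by a parallel induction: each operator $\Tb^c_\Pi$ is monotone and maps the constant vector $\tfrac{R_{\max}}{1-\gamma}\mathbf{1}$ to something bounded by $R_{\max}+\gamma\tfrac{R_{\max}}{1-\gamma}=\tfrac{R_{\max}}{1-\gamma}$, while monotonicity gives $V_k\ge V_0=0$. Hence for each fixed $s$ the sequence $\{V_k(s)\}_k\subset[0,R_{\max}/(1-\gamma)]$ is non-decreasing, so it converges; defining $\Vhat(s)$ as its limit and invoking finiteness of $\Scal$ completes the argument.

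The step I expect to be the main obstacle — really careful bookkeeping rather than a conceptual difficulty — is the structural identity: one must keep straight which clusters are frozen between $\Pi^{k-1}$ and $\Pi^k$ and invoke the deterministic tie-breaking rule so that $\Pi^k$ is a bona fide policy whose backup of $V_{k-1}$ equals $V_k$ \emph{exactly} (the $\ge$ direction alone would not close the monotonicity computation). I would also remark that this theorem only asserts convergence to a limit; characterizing $\Vhat$ (e.g.\ relating it to $V^*$, where Assumption~\ref{clusters} and a contraction argument enter) is deferred to the subsequent results.
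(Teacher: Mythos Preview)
Your proposal is correct and follows essentially the same approach as the paper: both establish coordinatewise monotonicity of $\{V_k\}$ by induction together with a uniform upper bound $R_{\max}/(1-\gamma)$, then invoke monotone convergence on each coordinate. Your handling of the inductive step is slightly more streamlined than the paper's (you lower-bound the max by the feasible choice $\alpha_{c_k}=\pi^k_{c_k}(s)$ directly, whereas the paper splits into cases according to whether the maximizer changes), but the core mechanism---the ``structural identity'' $V_k=\Tb_{\Pi^k}V_{k-1}$ and monotonicity of the fixed-policy backup---is identical.
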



The following lemmas will be used to prove Theorem \ref{mainthm}. The first result will show boundedness of the value estimates.
\begin{lemma}\label{l: bd} \textbf{Boundedness of Values.} 
The CVI iterates satisfy $\sup_{k\geq 0}\|V_{k}\|_{\infty}<\infty$.
\end{lemma}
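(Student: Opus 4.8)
The plan is to bound the iterates $V_k$ above and below by explicit constants, proceeding by induction on $k$ and using the two halves of Assumption~\ref{reward} separately: non-negativity of the reward gives a uniform lower bound, while boundedness of the reward together with $\gamma<1$ gives a uniform upper bound. Throughout, set $R_{\max}\triangleq\sup_{s\in\Scal,\,\abold\in\A}r(s,\abold)$, which is finite by Assumption~\ref{reward}.

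\emph{Step 1 (lower bound).} I would first show $V_k(s)\geq 0$ for every $k\geq 0$ and every $s$, by induction on $k$. The base case holds since $V_0\equiv 0$. For the inductive step, assume $V_k\geq 0$ componentwise. In the clustered Bellman operator \eqref{cvibell}, for each admissible $\alpha_{c_k}$ every summand $p(s'\mid s,\{\Pi^k_{\setminus c_k}(s),\alpha_{c_k}\})\,\big(r(s,\{\Pi^k_{\setminus c_k}(s),\alpha_{c_k}\})+\gamma V_k(s')\big)$ is a product of non-negative quantities, because $p\geq 0$, $r\geq 0$ (Assumption~\ref{reward}), $\gamma>0$, and $V_k\geq 0$; hence the inner expectation is non-negative for every $\alpha_{c_k}$, and so is its maximum. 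Therefore $V_{k+1}=\Tb^{c_k}_{\Pi^k}V_k\geq 0$.

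\emph{Step 2 (upper bound).} Next I would establish the recursion $\|V_{k+1}\|_\infty\leq R_{\max}+\gamma\|V_k\|_\infty$. For any $s$ and any admissible $\alpha_{c_k}$, since $p(\cdot\mid s,\{\Pi^k_{\setminus c_k}(s),\alpha_{c_k}\})$ is a probability distribution over next states, $\sum_{s'}p(s'\mid s,\cdot)\big(r(s,\cdot)+\gamma V_k(s')\big)\leq R_{\max}+\gamma\|V_k\|_\infty$; taking the maximum over $\alpha_{c_k}$ and then the supremum over $s$ preserves this bound. Unrolling from $\|V_0\|_\infty=0$ gives $\|V_k\|_\infty\leq R_{\max}\sum_{j=0}^{k-1}\gamma^j\leq R_{\max}/(1-\gamma)$. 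Combining the two steps, $0\leq V_k(s)\leq R_{\max}/(1-\gamma)$ for all $k$ and all $s$, so $\sup_{k\geq 0}\|V_k\|_\infty\leq R_{\max}/(1-\gamma)<\infty$.

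\emph{On the difficulty.} There is essentially no obstacle here: this is the standard a-priori bound for value iteration, and the only point to verify is that replacing the full Bellman maximization by the single-cluster maximization $\Tb^{c}_{\Pi}$ does not spoil the estimate $\Tb^{c}_{\Pi}V\leq R_{\max}+\gamma\|V\|_\infty$ — it does not, because that bound already holds for the objective inside the expectation before any maximization is taken. Note that Assumption~\ref{clusters} (each cluster chosen infinitely often) plays no role in boundedness; the bound holds for an arbitrary cluster-selection order. What is essential is using \emph{both} parts of Assumption~\ref{reward}: non-negativity supplies the lower bound, and boundedness together with $\gamma<1$ supplies the uniform upper bound.
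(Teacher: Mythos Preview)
Your proof is correct and follows essentially the same approach as the paper: both establish the recursion $\|V_{k+1}\|_\infty\leq R_{\max}+\gamma\|V_k\|_\infty$ and unroll it from $V_0=0$ to obtain the uniform bound $R_{\max}/(1-\gamma)$. Your treatment is slightly more thorough in that you explicitly verify the lower bound $V_k\geq 0$ (which is needed to identify $\|V_k\|_\infty$ with $\sup_s V_k(s)$), a point the paper's proof leaves implicit.
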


\begin{proof} Let $r = \sup_{x\in\X,\abold\in\A} r(x,\abold)$.
    \ifarxiv
    \begin{align*}
        &V_{k+1}(x) \\
        &= \max_{\alpha_{c_k}} \sum_{x'}p(x'|x,\{\Pi^k_{- c_k}(x),\alpha_{c_k}\})(r(x,\{\Pi^k_{- c_k}(x),\alpha_{c_k}\}) + \gamma V_k(x')),\\
        &\leq r+\gamma \|V_k\|_{\infty}.
    \end{align*}
    \else
    
    $V_{k+1}(x)= \max_{\alpha_{c_k}} \sum_{x'}p(x'|x,\{\Pi^k_{- c_k}(x),\alpha_{c_k}\})\times(r(x,\{\Pi^k_{- c_k}(x),\alpha_{c_k}\}) + \gamma V_k(x'))\leq r+\gamma \|V_k\|_{\infty}$.
    \fi
    Therefore $\|V_k\|_{\infty}\leq r\sum_{i=0}^{k-1}\gamma^i + \gamma^k\|V_0\|_{\infty}$. As $\gamma<1$, $\limsup_{k\to\infty}\|V_k\|_{\infty}\leq r/(1-\gamma)$. 
\end{proof}

\begin{lemma}\label{l: mon}\textbf{Monotonicity of the Clustered Bellman Operator.} $\Tb_{\Pi}$ is monotone in the sense that if $V_1(x)\geq V_2(x)$ then $\Tb_{\Pi}V_1(x)\geq \Tb_{\Pi}V_2(x)$.
\end{lemma}
\begin{proof}
    Note that the action here is fixed. Result follows from \cite{bertsekas1995dynamic} Volume II, Lemma 1.1.1.
\end{proof}

Now we proceed with the proof of Theorem \ref{mainthm}.

\begin{proof}
    First, induction will show that the sequence $\{V_k(x)\}_{k\geq 0}$ increases monotonically for all $x$. The base case inequality $V_{0}(x)\leq V_{1}(x)$ will shown first. The estimated values are initialized to $V_{0}(x) = 0$. The next value is evaluated as $V_{1}(x)= \max_{\alpha_{c_0}}\sum_{x'\in\X} p(x'|x,\{\Pi^0_{- {c_0}}(x),\alpha_{c_0}\})(r(x,\{\Pi^0_{- {c_0}}(x),\alpha_{c_0}\}) + \gamma V_{0}(x'))= 
    \max_{\alpha_{c_0}}r(x,\{\Pi^0_{- {c_0}}(x),\alpha_{c_0}\})\geq V_{0}(x)$, thus holding the conclusion for any $c$ and $x$.
    
    Next we will show that $V_{k+1}(x)\geq V_k(x)$. Note that $\Tb_{\Pi_k}^{c_k}V_k(x) = V_{k+1}(x)$ yields new action $\abold_{k+1}$ for $x$. Letting $\Pi^{k+1} = \abold_{k_1}$ in Eq \eqref{eval} means that $\Tb_{\Pi^{k+1}}V_k(x) = V_{k+1}(x)$. 
    
    Consider  $V_{k+1}(x) = \Tb_{\Pi^k}^{c_k}V_k(x)$ with action maximizer $\alpha_{c_k}^{k+1}$ for $x$. If $\alpha_{c_k}^{k+1}= \alpha_{c_k}^k$, then  $\abold^{k+1} = \abold^k$ and
    $V_{k+1}(x) = \Tb_{\Pi^k}V_k(x)$. Since $V_k(s)  = \Tb_{\Pi^k}V_{k-1}(x)$, then by Lemma \ref{l: mon} $\Tb_{\Pi^k}V_k(x)\geq \Tb_{\Pi^k}V_{k-1}(x)$ for $V_k(x)\geq V_{k-1}(x)$ for all $x$. 
    
    However if $\alpha_{c_k}^{k+1}\neq \alpha_{c_k}^k$, then the value can be bounded:
    \ifarxiv
    \small{
    \begin{align*}
        &V_{k+1}(x) = \Tb_{\Pi^{k+1}}V_k(x) = \Tb_{\Pi^{k}}^{c_k}V_k(x), \\
        &=  \max_{\alpha_{c_k}\in\A_c}\sum_{x'\in\X}p(x'|x,\{\Pi_{- {c_k}}^k(x),\alpha_{c_k}\})(r(x,\{\Pi_{- {c_k}}^k(x),\alpha_{c_k}\})+\gamma V_k(x')),\\
        &> \sum_{x'\in\X}p(x'|x,\Pi^k(x))(r(x,\Pi^k(x))+\gamma V_k(x') )= \Tb_{\Pi^k}V_k(x).
    \end{align*}}
    \else
    \small{
    \begin{align*}
        &V_{k+1}(x) = \Tb_{\Pi^{k+1}}V_k(x) = \Tb_{\Pi^{k}}^{c_k}V_k(x), \\
        \begin{split}
        &=\max_{\alpha_{c_k}\in\A_c}\sum_{x'\in\X}p(x'|x,\{\Pi_{- {c_k}}^k(x),\alpha_{c_k}\})\\
        &\qquad \qquad \qquad \qquad \times(r(x,\{\Pi_{- {c_k}}^k(x),\alpha_{c_k}\})+\gamma V_k(x')),
        \end{split}\\
        &> \sum_{x'\in\X}p(x'|x,\Pi^k(x))(r(x,\Pi^k(x))+\gamma V_k(x') )= \Tb_{\Pi^k}V_k(x).
    \end{align*}}
    \fi
    \normalsize
    Since $\Tb_{\Pi^k}V_k(x)\geq \Tb_{\Pi^k}V_{k-1}(x)$, then $ \Tb_{\Pi^{k+1}}V_k(x)\geq \Tb_{\Pi^k}V_{k-1}(x)$ for all $x$. By the boundedness property established by Lemma \ref{l: bd} and the above monotonicity of the sequence $\{V_k(x)\}_{k\geq 0}$, for all $x\in\mathcal{X}$, we conclude the convergence of $\{V_{k}\}$. 
\end{proof}

\begin{lemma} \textbf{Contraction.} The clustered Bellman operator satisfies, \label{contraction}

\ifarxiv\else\vspace{-4mm}\fi\small\begin{align}
    \max_{x\in\X}\vert \Tb^{c_1}_{\Pi_1}V_1(x)-\Tb^{c_2}_{\Pi_2}V_2(x)\vert \leq \gamma \max_{x\in\X}\vert V_1(x)-V_2(x)\vert.
\end{align}
\end{lemma}
\normalsize
Note that Lemma \ref{contraction} can be used to find a convergence rate for CVI, but it does not guarantee uniqueness of the fixed point $\lim_{k\to\infty} (\Tb^c_\Pi)^kV(x)$ unless $c_1=c_2$ and $\Pi_1=\Pi_2$ for all $k$. 

\begin{proof}
Note that:

\ifarxiv\else\vspace{-3mm}\fi\small\begin{align}
    \vert \max_x f(x,y,z)-\max_y g(x,y,z)\vert \leq \max_{x,y,z}\vert f(x,y,z)-g(x,y,z)\vert. \label{claim}
\end{align}

\normalsize \ifarxiv\else\vspace{-4mm}\fi
Then,

\small\ifarxiv\else\vspace{-3mm}\fi
\ifarxiv
\begin{align}
        &\vert\Tb^{c_1}_{\Pi_1}V_1(x)-\Tb^{c_2}_{\Pi_2}V_2(x)\vert\nonumber\\
        \begin{split}
        =\Big\vert\max_{\alpha_1\in\A_1}\sum_{x'\in\X} p(x'|x,\{\Pi_{- c_1}(x),\alpha_{c_1}\})(r(x,\{\Pi_{- c_1}(x),\alpha_{c_1}\})+\gamma V_1(x'))\\
        -\max_{\alpha_2\in\A_2}\sum_{x'\in\X} p(x'|x,\{\Pi_{- c_2}(x),\alpha_{c_2}\})(r(x,\{\Pi_{- c_2}(x),\alpha_{c_2}\})+\gamma V_2(x'))\Big \vert
        \end{split}\nonumber\\
        &\leq \max_{\abold\in\A}\Big\vert\sum_{x'\in\X}p(x'|x,\abold)\Big(r(x,\abold)+\gamma V_1(x')\Big)-\sum_{x'\in\X}p(x'|x,\abold)\Big(r(x,\abold)+\gamma V_2(x')\Big)\Big\vert\label{use claim} \\
        &=\max_{\abold\in\A}\sum_{x'\in\X}p(x'|x,\abold)\Big\vert\gamma V_1(x') -\gamma V_2(x')\Big\vert\nonumber\\
        &\leq\max_{x\in\X}\gamma \Big\vert V_1(x') - V_2(x')\Big\vert\nonumber
\end{align}
\else
\begin{align}
        &\vert\Tb^{c_1}_{\Pi_1}V_1(x)-\Tb^{c_2}_{\Pi_2}V_2(x)\vert\nonumber\\
        \begin{split}
        &=\Big\vert\max_{\alpha_1\in\A_1}\sum_{x'\in\X} p(x'|x,\{\Pi_{- c_1}(x),\alpha_{c_1}\})\\
        &\qquad \qquad \qquad \qquad \times(r(x,\{\Pi_{- c_1}(x),\alpha_{c_1}\})+\gamma V_1(x'))\\
        &\quad -\max_{\alpha_2\in\A_2}\sum_{x'\in\X} p(x'|x,\{\Pi_{- c_2}(x),\alpha_{c_2}\})\\
        &\qquad \qquad \qquad \qquad\times (r(x,\{\Pi_{- c_2}(x),\alpha_{c_2}\})+\gamma V_2(x'))\Big \vert
        \end{split}\nonumber\\
        \begin{split}
        &\leq \max_{\abold\in\A}\Big\vert\sum_{x'\in\X}p(x'|x,\abold)\Big(r(x,\abold)+\gamma V_1(x')\Big)\\
        &\quad -\sum_{x'\in\X}p(x'|x,\abold)\Big(r(x,\abold)+\gamma V_2(x')\Big)\Big\vert
        \end{split}
        \label{use claim} \\
        &=\max_{\abold\in\A}\sum_{x'\in\X}p(x'|x,\abold)\Big\vert\gamma V_1(x') -\gamma V_2(x')\Big\vert\nonumber\\
        &\leq\max_{x\in\X}\gamma \Big\vert V_1(x') - V_2(x')\Big\vert\nonumber
\end{align}
\fi
\normalsize
where \eqref{use claim} uses \eqref{claim}.
\end{proof}

\ifarxiv\else\vspace{-4mm}\fi\subsection{Error Analysis}
In this section we discuss the performance of the policy output by CVI. While Theorem \ref{mainthm} establishes convergence, it does not guarantee uniqueness or optimality. In fact, the fixed point of the algorithm will depend on the initial policy guess and the cluster optimization order.

\begin{theorem} \label{th: fixed}
    \textbf{Consistency.} The true optimal value $V^*$ is a fixed point of $\Tb_{\Pi^*}^c$ for any $c\in\C$ and optimal policy $\Pi^*$.
\end{theorem}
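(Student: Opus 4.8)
The idea is to sandwich $\Tb_{\Pi^*}^c V^*(s)$ between two quantities that both equal $V^*(s)$. The clustered Bellman operator $\Tb_{\Pi^*}^c$ maximizes over only the single block $\alpha_c$ while holding the remaining blocks fixed at $\Pi^*_{\setminus c}(s)$, so it optimizes over a \emph{subset} of the full action space and is therefore bounded above by the ordinary Bellman operator $\Tb$, and bounded below by the value obtained from the single feasible choice $\alpha_c = \pi^*_c(s)$.

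\textbf{Step 1 (lower bound).} For each state $s$, one admissible choice in the maximization defining $\Tb_{\Pi^*}^c V^*(s)$ is $\alpha_c = \pi^*_c(s)$, which makes the full action equal to $\Pi^*(s)$. Hence
\begin{align*}
    \Tb_{\Pi^*}^c V^*(s) &\geq \sum_{s'} p(s'|s,\Pi^*(s))\bigl(r(s,\Pi^*(s)) + \gamma V^*(s')\bigr).
\end{align*}
Because $\Pi^*$ is an optimal policy, $\Pi^*(s)$ attains the maximum in the ordinary Bellman operator evaluated at $V^*$; combined with the Bellman fixed-point identity $\Tb V^*(s) = V^*(s)$, the right-hand side equals $V^*(s)$. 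Thus $\Tb_{\Pi^*}^c V^*(s) \geq V^*(s)$.

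\textbf{Step 2 (upper bound).} For every $\alpha_c$, the tuple $\{\Pi^*_{\setminus c}(s),\alpha_c\}$ is a particular element of the full action space $\A$, so
\begin{align*}
    \sum_{s'} p(s'|s,\{\Pi^*_{\setminus c}(s),\alpha_c\})\bigl(r(s,\{\Pi^*_{\setminus c}(s),\alpha_c\}) + \gamma V^*(s')\bigr) \leq \Tb V^*(s) = V^*(s).
\end{align*}
Taking the maximum over $\alpha_c$ on the left gives $\Tb_{\Pi^*}^c V^*(s) \leq V^*(s)$. Combining Steps 1 and 2 yields $\Tb_{\Pi^*}^c V^*(s) = V^*(s)$ for all $s$ and all $c$.

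\textbf{Expected obstacle.} There is essentially no hard analytic step here; the only thing to be careful about is the bookkeeping of which action blocks are being optimized versus held fixed, and invoking optimality of $\Pi^*$ (not merely the Bellman equation for $V^*$) for the lower bound — the Bellman fixed-point property alone guarantees the upper bound, but the lower bound needs that $\Pi^*(s)$ actually achieves the Bellman maximum at $V^*$.
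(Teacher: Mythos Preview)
Your proof is correct and is essentially the same argument as the paper's, just presented directly as a sandwich rather than by contradiction: the paper argues that if the clustered maximizer $\alpha_c'$ differed from $\alpha_c^*$ then $\{\Pi^*_{\setminus c}(s),\alpha_c'\}$ would strictly beat $\abold^*$, contradicting optimality, which is exactly your two inequalities (feasibility of $\pi_c^*(s)$ for the lower bound, and the restricted max being dominated by the full Bellman max for the upper bound). Your version is arguably cleaner because it avoids the tie-breaking issue implicit in the paper's ``if $\alpha_c'\neq\alpha_c^*$ then strictly higher value'' step.
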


\begin{proof}
    Let $\Pi^*(x) = \abold^*$ and $V^*(x)$ be a true optimal policy, optimal control for $x$, and optimal value of $x$ for $\mathcal{M}$. Now consider one iteration of CVI that optimizes with respect to cluster $c$. The next choice of $\alpha_c$ is $\alpha_c' = \argmax_{\alpha_c\in\A_c} \sum_{x'\in\X}p(x'|x,\{\Pi_{- c}^*(x),\alpha_c\})(r(x,\abold)+\gamma  V^*(x))$. If $\alpha_c'\neq \alpha_c^*$, then the choice $\abold' = \{\alpha_1^*,\dots,\alpha_c',\dots,\alpha_c^*\}$ would yield a higher value than $\abold^* = \{\alpha_1^*,\dots,\alpha_c^*,\dots,\alpha_c^*\}$. Therefore $\Pi^*(x)=\abold^*$ is not a true optimal policy and control. This contradicts the assumption that $V^*(x)$ is the optimal value and that $\Pi^*(x)=\abold^*$ are optimal policies and controls. Then $\Tb_{\Pi^*}^cV^*(x)=V^*(x)$. 
\end{proof}

While the consistency result shows that CVI is capable of finding the optimal answer, the next theorem provides a general performance bound that holds for every fixed point.

\begin{theorem}\label{bounds}
    \textbf{Suboptimality of Approximate Policy.} Consider the CVI Algorithm (Algorithm \ref{CVI}) for a MDP such that Assumptions \ref{as: agent behavior}, \ref{as:reward}, and \ref{clusters} are fulfilled. Let $V^*$ be the true optimal value under optimal policy $\Pi^*$, and let $\Vhat$ be a fixed point of CVI. Then,
    
    \small
\ifarxiv\else\vspace{-2mm}\fi\begin{align}
    \text{(UB)}\ \|V^*-\Vhat\|\leq \frac{1}{1-\gamma} \|\Vhat-\Tb_{\Pi^*}\Vhat\|\label{UB},\\
    \text{(LB)}\ \|V^*-\Vhat\|\geq \frac{1}{1+\gamma} \|\Vhat-\Tb_{\Pi^*}\Vhat\|.\label{LB}
\end{align}
\end{theorem}
\begin{proof}
\normalsize
\ifarxiv
\begin{align*}
    &\|V^*-\Vhat\|_{\infty} \\
    &= \|V^*-\Tb_{\Pi^*}\Vhat + \Tb_{\Pi^*}\Vhat - \Vhat\|_{\infty}\\
    &\leq \|\Tb_{\Pi^*}V^*-\Tb_{\Pi^*}\Vhat\|_{\infty} + \|\Tb_{\Pi^*}\Vhat-\Vhat\|_{\infty}\\
    &\leq \gamma\|V^*-\Vhat\|_{\infty} + \|\Tb_{\Pi^*}\Vhat-\Vhat\|_{\infty}\\
    &\Rightarrow (1-\gamma) \|V^*-\Vhat\|_{\infty} \leq \|\Tb_{\Pi^*}\Vhat-\Vhat\|_{\infty}\\
    &\Rightarrow\|V^*-\Vhat\|_{\infty} \leq \frac{1}{1-\gamma}\|\Tb_{\Pi^*}\Vhat-\Vhat\|_{\infty}
\end{align*}
\begin{align*}
    &\|\Vhat-V^*\|_{\infty} \\
    &= \|\Vhat-\Tb_{\Pi^*}\Vhat+\Tb_{\Pi^*}\Vhat-V^*\|_{\infty}\\
    &\geq \|\Vhat-\Tb_{\Pi^*}\Vhat\|_{\infty} - \|\Tb_{\Pi^*}\Vhat-\Tb_{\Pi^*}V^*\|_{\infty}\\
    &\geq \|\Vhat-\Tb_{\Pi^*}\Vhat\|_{\infty} - \gamma\|\Vhat-V^*\|_{\infty}\\
    &\Rightarrow (1+\gamma) \|V^*-\Vhat\|_{\infty} \geq \|\Tb_{\Pi^*}\Vhat-\Vhat\|_{\infty}\\
    &\Rightarrow\|V^*-\Vhat\|_{\infty} \geq \frac{1}{1+\gamma}\|\Tb_{\Pi^*}\Vhat-\Vhat\|_{\infty}
\end{align*}
\else
See \cite{fiscko2022cluster}.
\fi
\end{proof}

The right hand side of equations \eqref{UB}\eqref{LB} depend on the term $\|\Vhat - \Tb^*\Vhat\|$ which is a measure of the improvement under the full Bellman operator. Upon reaching a fixed point of CVI, a practitioner can either perform one evaluation of the full Bellman operator to check the bound, or use the fixed point of CVI as an initial guess for standard VI. Note that CVI can be considered as a type of approximate VI, and Theorem \ref{bounds} matches the general results of approximate VI \cite{bertsekas1995dynamic}.

\ifarxiv\else\vspace{-3mm}\fi\subsection{Hybrid Approach}
The structures of \eqref{UB} and \eqref{LB} imply that a hybrid approach may be used to find $V^*$ but with the computational savings of CVI; the error introduced by cheap CVI updates may be rectified with an occasional expensive VI update. This proposed method is outlined in Algorithm \ref{hybrid}. From any value guess $V_k$, CVI updates are used to compute $\Vhat_{k+1}$. The subsequent VI update corrects the policy guess, enabling the next iteration to use policy that escapes any local optimum found by CVI. 

Parameters $\delta$ and $\epsilon$ must be selected to run the fewest number of full Bellman updates as possible. There is a risk that the true Bellman operator will repeatedly output the \emph{same} policy; i.e. CVI had found the correct policy, but $\epsilon$ was not chosen small enough to satisfy the convergence condition for $\delta$. To avoid this scenario, we suggest choosing $\epsilon$ an order of magnitude less than $\delta$, and monitoring $\Pi_k$ versus $\Pi_{k+1}$ to check for improvement. An alternate convergence criteria is to use \eqref{UB} to use $\Vhat_{k+1}$ and $V_{k+1}$ to bound $\|\Vhat_{k+1}-V^*\|$. If the distance of $\Vhat$ to the optimum is sufficiently small, then the user may choose to terminate the Algorithm and not run any more evaluations of the full Bellman operator.

\begin{algorithm}
\SetAlgoLined
 Initialize values $V_{0}(x)\gets 0,\ \forall x\in\X$\;
 Initialize policy guess $\Pi^0(x), \forall x\in\X$\;
 $k=-1$\;
 \While{$\|V_{k}-V_{k-1}\|_{\infty}>\delta$}{
 $k=k+1$\;
 $\Vhat_{k+1}(x)\gets \text{CVI}(V_{k}(x), \Pi_k(x), \epsilon),\ \forall x\in \X$\;
 $V_{k+1}(x)\gets \Tb\Vhat_{k+1}(x),\ \forall x\in \X$\;
 $\Pi_{k+1}(x)\gets \argmax_{\abold} \Tb\Vhat_{k+1}(x)$\;
 }
 \caption{Hybrid CVI/VI}
 \label{hybrid}
\end{algorithm}

\begin{theorem}
\textbf{Convergence and Optimality of Hybrid CVI/VI.} Consider the Hybrid CVI/VI Algorithm (Algorithm \ref{hybrid}) for a MDP such that Assumptions \ref{as: agent behavior}, \ref{as:reward}, and \ref{clusters} are fulfilled.  Then $V_k\to V^*$ as $k\to\infty$.
\end{theorem}
\begin{proof}
Convergence may once again be shown through monotonicity and boundedness of updates. $V_k(x)\leq \Vhat_{k+1}(x)$ for all $x\in\X$ is already known for CVI updates. The next improvement $\Vhat_{k+1}(x) \leq V_{k+1}(x)$ for all $x\in\X$ may be determined as the $\Tb$ operator searches over the entire action space and will perform at least as well as a CVI update.

To see optimality, note that for any evaluation $\Tb \Vhat_k(x)=V_k(x)$, if $\Vhat_k(x)=V_k(x)$ then the fixed point $V_k(x)=V^*(x)$ of $\Tb$ has been achieved; else there will be monotonic improvement $V_k(x)\geq \Vhat_k(x)$. As $V^*(x)\geq \Vhat_k$ for all $k$ (due to $r(x,\abold)\geq 0$ and $V_0(x)=0$), $V_k(x)\to V^*(x)$ as $k\to\infty$.
\end{proof}

\ifarxiv
\subsection{Extensions \& Variants}
The Bellman operator may be substituted with the clustered Bellman operator in a variety of dynamic programming approaches given the clustered setting. In this section, we recap a few of the most immediate extensions and suggest how the clustered Bellman operator could be utilized.

\paragraph{Value Function Approximation (VFA)} Recall that CVI was motivated to handle combinatorial action spaces. This may be combined with VFA \cite{bertsekas1995dynamic}, which handles large state spaces by approximating the value function with a parametric class. A cluster-based formulation is well-suited for parametric approximation, as each features may be defined per cluster.

\paragraph{Clustered Policy Iteration} The PI algorithm may also make use of CVI updates to reduce computation in the policy improvement step. While monotonicity will still give convergence, the computation savings may not be significant. Recall that the per-iteration complexity of PI is $O(|\mathcal{A}||\X|^2+|\X|^3)$; the computation here will likely be dominated by the matrix inversion term $|\X|^3$, meaning that the clustered Bellman operator may not be the most impactful step for speed. 

\paragraph{Approximate PI (API)} In practice, API \cite{bertsekas1995dynamic} is often used to compute policy evaluation and improvement up to some allowed error bounds. The results of Theorem \ref{bounds} are similar to those for API as both techniques perform an approximate policy improvement step, but API has an additional step of approximately evaluating  proposed policies. In CVI, policy evaluation is only performed after the found policy converges. The two techniques are similar in concept, but CVI is based on structure inherent to a cluster-based control policy. 


\paragraph{State Aggregation} Clustered policies are thematically related to state aggregation methods, but the technical formulations are distinct. Both methods aim to reduce the dimensionality of the problem by solving smaller MDPs. In state aggregation, similar states are grouped together to form a new system with a reduced state space and weighted transitions. In comparison, the clustered formulation works for applications with partitioned substates and factorable transitions. These two methods could be used in tandem to enjoy further dimensionality reduction, whereby a clustered formulation could use aggregation techniques on the set of substates. 
\fi

\ifarxiv\else\vspace{-2mm}\fi\section{Separable Systems} \label{separable section}
The next section of this paper will focus on a special class of TI-MDPs that have additive reward functions, also known as factored MDPs. It will be shown in this class the reward structure guarantees that CVI will find optimal policies.

\begin{definition}
The TI-MDP is called a \emph{separable} system if the rewards are additive with respect to the cluster assignments: 

\ifarxiv\else\vspace{-2mm}\fi\small\begin{equation}
    r(x,\abold) = \sum_{c=1}^C r_c(x_c,\alpha_c) 
\end{equation}
\end{definition}
\normalsize
A more general version of this definition is additive rewards with respect to the agents, i.e. $r(x,\abold) = \sum_{n=1}^Nr_n(x_n,\alpha_n)$. The rest of this section will assume that the system is a separable TI-MDP, which arises in clustered control applications. For example, agents may fall into a customer type, and the central planner may define a reward based on the type, i.e. $r_c(x_c,\alpha_c) = \sum_{n\in \C_c}\mathbbm{1}(x_n \in \mathcal{D}_c)$ where $\mathcal{D}_c$ is a set of desirable actions for agents of type $c$, or the reward may represent the proportion of agents representing general good behavior, i.e. $r(x,\abold) = \frac{1}{N}\sum_{n\in\N} \mathbbm{1}(x_n\in \mathcal{D}_n)$. Reward functions that are not separable depend on the whole state. For example, zero-sum games, the prisoner's dilemma, and chicken cannot be represented with separable reward functions, as the reward depends on the states of both players. 


\begin{lemma} \label{separability}
    \textbf{Separability of Bellman Operator on Separable Systems.} Consider solving a separable TI-MDP with VI. At some iteration $k$, the full Bellman operator can be expressed as the sum of $C$ modified Bellman operators that each maximize over the action space for one cluster.
    
    \ifarxiv\else\vspace{-4mm}\fi\small\begin{align}
        V_{k+1}(x)=\Tb V_k(x) &= \sum_{c=1}^C\Tb^c V_k(x) = \sum_{c=1}^C V_{k+1}(x_c),
    \end{align}
    \normalsize
    where the local cluster values are defined iteratively as $V_0(x_c)=0$ and,
    \small\begin{align}
        \Tb^cV_k(x)=\max_{\alpha_c\in\A_c} \mathbb{E}\left [r_c(x_c,\alpha_c) + \gamma V_{k}(x_c') \Big\vert x_c'|x,\alpha_c\sim T_c \right], \label{local values}
    \end{align}
    \normalsize
    where $T_c=p(x_c'|x,\alpha_c)$ is one of the factors of the factored transition matrix.
\end{lemma}

Note that any evaluation of the clustered Bellman operator requires a full state $x$ for proper conditioning; thus, the local cluster value equation \eqref{local values} is a function of $x$. 

\begin{proof}
    The proof will show by induction that each iteration of standard VI can be split into independent optimizations over the action space for each cluster. 
    
    Base case. Initialize $V_0(x)=0$ for all $x$. Then $V_1(x) = r(x,\abold)=\sum_{c=1}^C r_c(x_c,\alpha_c)=\sum_{c=1}^CV_1(x_c)$. 
    
    Then, 
    \ifarxiv
    \small\begin{align*}
        &V_2(x)=\Tb V_1(x) \\
        &=\max_{\abold\in\A}\sum_{x'\in\X}p(s_1'|x,\alpha_1)\dots p(x_C'|x,\alpha_C)\left [\sum_{c=1}^C r_c(x_c,\alpha_c) + \gamma r(x',\abold^{k=1})\right ]\\
        &= \max_{\abold\in\A}\Big [\sum_{x_1'\in\X_1} p(x_1'|x,\alpha_1)\Big [\sum_{x_2'\in\X_2}p(x_2'|x,\alpha_2)\Big[ \dots  \sum_{x_C'\in\X_C}p(x_C'|x,\alpha_C)\Big [\sum_{c=1}^Cr_c(x_c,\alpha_c) + \gamma  r(s', \abold^{k=1})\Big ]\Big] \Big] \Big]\\
        &= \max_{\abold} \mathbb{E}_{x_1'|x,\alpha_1}\dots\mathbb{E}_{x_C'|x,\alpha_C}\left[\sum_{c=1}^Cr_c(x_c,\alpha_c) + \gamma r(x', \abold^{k=1})\right]\\
        &= \max_{\abold} \mathbb{E}_{x_1'|x,\alpha_1}\dots\mathbb{E}_{x_C'|x,\alpha_C}\left[\sum_{c=1}^Cr_c(x_c,\alpha_c) + \gamma \sum_{c=1}^Cr_c(x_c', \alpha^{k=1}_{c})\right]\\
        &= \max_{\abold}\sum_{c=1}^C \left [ \mathbb{E}_{x_1'|x,\alpha_1}\dots\mathbb{E}_{x_C'|x,\alpha_C}\left[r_c(x_c,\alpha_c) + \gamma r_c(x_c',\alpha^{k=1}_c)\right]\right ]\label{pre}
        \end{align*}
        \begin{align*}
        &= \max_{\abold}\sum_{c=1}^C \mathbb{E}_{x_c'|x,\alpha_c}\left[r_c(x_c,\alpha_c) + \gamma r_c(x_c',\alpha^{k=1}_c)\right]\\
        &= \sum_{c=1}^C \max_{\alpha_c} \mathbb{E}_{x_c'|x,\alpha_c}\left[r_c(x_c,\alpha_c) + \gamma r_c(x_c',\alpha^{k=1}_c)\right] \\
        &= \sum_{c=1}^C \max_{\alpha_c} \mathbb{E}_{x_c'|x,\alpha_c}\left[r_c(x_c,\alpha_c) + \gamma V_1(x_c')\right]\\
        &=\sum_{c=1}^C \Tb^cV_1(x)=\sum_{c=1}^C V_2(x_c)=V_2(x)
    \end{align*}
    \else
     \ifarxiv\else\vspace{-2mm}\fi\small\begin{align*}
        &V_2(x)=\Tb V_1(x) \\
        \begin{split}
            &=\max_{\abold\in\A}\sum_{x'\in\X}p(s_1'|x,\alpha_1)\dots p(x_C'|x,\alpha_C)\Big [\sum_{c=1}^C r_c(x_c,\alpha_c) \\
            &\qquad \qquad + \gamma r(x',\abold^{k=1})\Big ]
        \end{split}\\
        \begin{split}
            &= \max_{\abold\in\A}\Big [\sum_{x_1'\in\X_1} p(x_1'|x,\alpha_1)\Big [\sum_{x_2'\in\X_2}p(x_2'|x,\alpha_2)\Big[ \dots  \\
            &\qquad \qquad  \sum_{x_C'\in\X_C}p(x_C'|x,\alpha_C)\Big [\sum_{c=1}^Cr_c(x_c,\alpha_c) + \gamma  r(s', \abold^{k=1})\Big ]\Big] \Big] \Big]
        \end{split}\\
        &= \max_{\abold} \mathbb{E}_{x_1'|x,\alpha_1}\dots\mathbb{E}_{x_C'|x,\alpha_C}\left[\sum_{c=1}^Cr_c(x_c,\alpha_c) + \gamma\sum_{c=1}^Cr_c(x_c', \alpha^{k=1}_{c})\right]\\
        &= \max_{\abold}\sum_{c=1}^C \left [ \mathbb{E}_{x_1'|x,\alpha_1}\dots\mathbb{E}_{x_C'|x,\alpha_C}\left[r_c(x_c,\alpha_c) + \gamma r_c(x_c',\alpha^{k=1}_c)\right]\right ]\label{pre}\\
        &= \sum_{c=1}^C \max_{\alpha_c} \mathbb{E}_{x_c'|x,\alpha_c}\left[r_c(x_c,\alpha_c) + \gamma V_1(x_c')\right] \\
        &=\sum_{c=1}^C \Tb^cV_1(x)=\sum_{c=1}^C V_2(x_c)=V_2(x)
    \end{align*}
    \fi
    
    \normalsize Notice that $r_c(x_c')$ is independent of $x_{c'}'$ where $c'\neq c$. The last line means that the resulting $V$ is decomposable over $c$, and that each element of $\abold$ may be optimized independently. 
    
    \ifarxiv
    General case. 
    \small{\begin{align*}
        &\Tb V_k(x) \\
        &=  \max_{\abold\in\A}\sum_{x'}p(x_1'|x,\alpha_1)\dots p(x_C'|x,\alpha_C)\left[ \sum_{c=1}^Cr_c(x_c,\alpha_c) + \gamma V_k(x')\right]\\
        &= \max_{\abold\in\A} \mathbb{E}_{x_1'|x,\alpha_1}\dots\mathbb{E}_{x_C'|x,\alpha_C}\left[\sum_{c=1}^Cr_c(x_c,\alpha_c) + \gamma \sum_{c=1}^CV_{k}(x_c')\right]\\
        &= \max_{\abold\in\A}\sum_{c=1}^C  \mathbb{E}_{x_1'|x,\alpha_1}\dots\mathbb{E}_{x_C'|x,\alpha_C}\left[r_c(x_c,\alpha_c) + \gamma V_{k}(x_c')\right]\\
        &= \max_{\abold\in\A}\sum_{c=1}^C \mathbb{E}_{x_c'|x,\alpha_c}\left [r_c(x_c,\alpha_c) + \gamma V_{k}(x_c')\right]\\
        &= \sum_{c=1}^C \max_{\alpha_c\in\A_c} \mathbb{E}_{x_c'|x,\alpha_c}\left [r_c(x_c,\alpha_c) + \gamma V_{k}(x_c')\right ] = \sum_{c=1}^C \Tb ^cV_{k}(x) = \sum_{c=1}^CV_{k+1}(x_c)=V_{k+1}(x).
    \end{align*}}
    \else
    The general case follows similarly and yields $\Tb V_k(x) =\sum_{c=1}^C \max_{\alpha_c\in\A_c} \mathbb{E}_{x_c'|x,\alpha_c}\left [r_c(x_c,\alpha_c) + \gamma V_{k}(x_c')\right ]$.
    \fi
    
\normalsize The last line shows that $V$ may again be decomposed for general $k$ and that $\alpha_c$ may be optimized independently.
\end{proof}

Lemma \ref{separability} shows that each Bellman update may be written as a sum over clusters, where each addend is only dependent on one element of the control vector, $\alpha_c$. With this decomposition, the next theorem establishes that CVI applied to separable systems will converge to the true optimal value. 
\begin{theorem}\label{optimality}
    \textbf{Optimality of CVI on Separable Systems.} Consider the CVI Algorithm for a separable TI-MDP. The CVI algorithm will converge to the true optimal value, $\Vhat = V^*$. 
\end{theorem}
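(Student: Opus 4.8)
The plan is to use Lemma~\ref{separability} to show that running CVI on a separable TI-MDP is equivalent to running $C$ decoupled value-iteration recursions, one per cluster, and then to argue that each of them converges to its cluster's optimal value. First I would show by induction on $k$ that the CVI iterate decomposes as $V_k = \sum_{c\in\C} V_k^c$, where $V_0^c \equiv 0$ and, at iteration $k$ with selected cluster $c_k = \Omega(k)$, the component $V_{k+1}^{c_k} = \Tb^{c_k} V_k^{c_k}$ (the optimal cluster-$c_k$ backup of Lemma~\ref{separability}), while for $c \neq c_k$ the component receives the fixed-policy backup $V_{k+1}^{c} = \Tb_{\pi_c^k} V_k^c$ under the frozen policy $\pi_c^k$ (the cluster-$c$ analogue of the evaluation operator \eqref{eval}). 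The crux is the algebra: because $r(s,\abold) = \sum_c r(s_c,\alpha_c)$ and $p(s'|s,\abold) = \prod_c p(s_c'|s,\alpha_c)$, the only terms of $\Tb^{c_k}_{\Pi^k} V_k$ that depend on $\alpha_{c_k}$ are exactly those comprising $\Tb^{c_k} V_k^{c_k}$, and the remaining terms are constant in $\alpha_{c_k}$ and reproduce the other components' fixed-policy backups; in particular the maximizer CVI assigns to $c_k$ equals $\argmax_{\alpha_{c_k}} \Tb^{c_k} V_k^{c_k}$, so the displayed recursion is precisely what the algorithm computes.

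Next I would show that each sequence $\{V_k^c\}_{k\ge 0}$ is monotone non-decreasing and uniformly bounded, hence convergent. Monotonicity is the componentwise version of the induction in the proof of Theorem~\ref{mainthm}: the base case $V_1^c \ge 0 = V_0^c$ uses Assumption~\ref{reward}; the inductive step uses monotonicity of $\Tb^c$ (a maximum of monotone operators) and of $\Tb_{\pi_c}$ (Lemma~\ref{l: mon}), together with the bookkeeping fact that cluster $c$'s frozen policy is refreshed only on the iterations where $c$ is selected, so that the most recent optimal backup dominates all subsequent fixed-policy backups to that component. Uniform boundedness is immediate from $0 \le V_k^c(s) \le \sum_{c'} V_k^{c'}(s) = V_k(s) \le \|V_k\|_\infty$ and Lemma~\ref{l: bd}. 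Therefore $V_k^c \to \Vhat^c$ for some $\Vhat^c \ge 0$, with $\Vhat = \sum_c \Vhat^c$.

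Finally I would pass to the limit and invoke uniqueness. Fix a cluster $c$; by Assumption~\ref{clusters} it is selected infinitely often, say at iterations $k_1 < k_2 < \cdots$, and $V_{k_i+1}^c = \Tb^c V_{k_i}^c$ along this subsequence. Since $\Tb^c$ is a $\gamma$-contraction, hence continuous, letting $i\to\infty$ gives $\Vhat^c = \Tb^c \Vhat^c$. Applying Lemma~\ref{separability} once more to $\Vhat = \sum_c \Vhat^c$, we get $\Tb \Vhat = \sum_c \Tb^c \Vhat^c = \sum_c \Vhat^c = \Vhat$, so $\Vhat$ is a fixed point of the full Bellman operator, which has a unique fixed point; hence $\Vhat = V^*$, which is the claim.

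I expect the main obstacle to be the first step: carefully verifying that the clustered Bellman operator on a separable system splits \emph{exactly} — not approximately — into an optimal backup on the selected cluster plus fixed-policy backups elsewhere, that the $\argmax$ over $\alpha_{c_k}$ genuinely decouples, and that this decomposition is propagated verbatim by the iteration. The remaining steps are a componentwise rerun of the Theorem~\ref{mainthm} monotonicity and boundedness argument followed by a routine contraction-and-uniqueness argument, but they do require care in tracking when each cluster's frozen policy was last updated.
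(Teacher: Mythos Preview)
Your proposal is correct and follows the paper's overall strategy: decompose the CVI iterate into per-cluster components via Lemma~\ref{separability}, show each component converges to the cluster optimum, and sum. The only substantive difference is in how you pin down the componentwise limits. The paper packages one full cycle of updates to cluster $c$ as a composite operator $\overline{\Tb}^c = \Tb_{\pi}\cdots\Tb_{\pi}\Tb^c$ (one optimization step followed by $C-1$ fixed-policy evaluations), observes that it is a $\gamma^{C}$-contraction, and verifies that $V^{c,*}$ is its unique fixed point. You instead rerun the monotonicity and boundedness argument of Theorem~\ref{mainthm} componentwise, extract the subsequence of iterations at which cluster $c$ is selected, and pass the relation $V_{k_i+1}^c = \Tb^c V_{k_i}^c$ to the limit by continuity of $\Tb^c$. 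Your route is slightly longer but more transparent for general update schedules $\Omega$ satisfying Assumption~\ref{clusters} (the paper's composite operator is implicitly tied to a cyclic order), and it makes explicit the bookkeeping for the frozen-policy backups on the non-selected clusters, which the paper's proof leaves largely implicit.
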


\begin{proof}
First it will be shown that local updates on separable systems as in \eqref{local values} converge to the unique optimal value, and then it will be shown that \eqref{local values} performs the same optimization as CVI. As shown in the proof of Lemma \ref{bellman}, a VI update for a separable system can be written as $\Tb V_k(x)=\sum_{c=1}^C \Tb^c V_k(x)$. Therefore $ V^*(x) = \lim_{k\to\infty} \Tb V_k(x)=\lim_{k\to\infty} \Tb^k V_0(x) = \lim_{k\to\infty} \sum_{c=1}^C(\Tb^{c})^kV_0(x) = \sum_{c=1}^CV^{*}(x_c)$.
\small
\ifarxiv
\begin{equation*}
    V^*(x) = \lim_{k\to\infty} \Tb V_k(x)=\lim_{k\to\infty} \Tb^k V_0(x) = \lim_{k\to\infty} \sum_{c=1}^C(\Tb^{c})^kV_0(x) = \sum_{c=1}^CV^{*}(x_c).
\end{equation*}
\fi

\normalsize
Next, it will be shown that \eqref{local values} performs the same optimization as CVI. Consider a sequence of local cluster values $\{U(x_c)\}_{k\geq 0}$ for some cluster $c$, except now instead of being optimized at each time step as in \eqref{local values}, it will be optimized one time out of every $C$ time steps as in CVI. Let us construct $\{U(x_c)\}_{k\geq 0}$. Let $b_c$ be the index of cluster $c$ in $\Omega$; i.e. $c$ is optimized whenever $k\text{ mod }C = b_c$. Then, the value of cluster $c$ at time $b_c$ is $U_{b_c-1}(x_c) = \Tb_{\pi}\dots\Tb_{\pi} U_0(x)$ which is $b_c-1$ evaluations of the Bellman operator with initial policy $\pi$. 

Next, define the operator $\overline{\Tb}^cU(x) = \Tb_{\pi}\dots \Tb_{\pi }\Tb^cU(x)$ where $\Tb^c$ preforms the maximization step and produces candidate $\alpha_c$ that is then evaluated $C-1$ times on the resulting cluster value. Note that $\overline{\Tb}^c$ is a contraction (with parameter $\gamma^C$), so taking the limit of $k\to\infty$ of $(\overline{\Tb}^c)^kU(x)$ will converge to a unique fixed point. Additional evaluations of $\Tb_{\pi}$ will not change the final value, and $b-1$ applications of $\Tb_{\pi}$ to $U_0^c$ (with arbitrary $\pi$) before $\overline{\Tb}^c$ will change the initial guess to $U^c_{b_c-1}$, but will not change the fixed point. Thus $\lim_{k\to\infty} \sum_{c=1}^C (\overline{\Tb}^c)^k(\Tb_{\pi})^{b_c-1}U_0(x)=\lim_{k\to\infty} \sum_{c=1}^C(\overline{\Tb}^c)^kU_{b_c-1}(x)=\sum_{c=1}^CU^{*}(x_c) = \sum_{c=1}^CV^{*}(x_c).$

\small\ifarxiv\else\vspace{-2mm}\fi
\ifarxiv
\begin{equation*}
    \lim_{k\to\infty} \sum_{c=1}^C (\overline{\Tb}^c)^k(\Tb_{\pi})^{b_c-1}U_0(x)=\lim_{k\to\infty} \sum_{c=1}^C(\overline{\Tb}^c)^kU_{b_c-1}(x)=\sum_{c=1}^CU^{*}(x_c) = \sum_{c=1}^CV^{*}(x_c).
\end{equation*}
\fi
\end{proof}

\normalsize
For separable TI-MDPs, the user may enjoy both the value improvement from the clustered policy and the complexity savings of CVI without sacrificing performance.


\ifarxiv\else\vspace{-3mm}\fi\section{Choosing Cluster Assignments}\label{cluster section}
It has previously been assumed that the cluster assignments were known \emph{a priori}. This is a reasonable assumption when classes of agents are apparent or existing; i.e. category of customer. For general systems, however, the clustered control policy may be used to increase the received value without relying on intuition about existing agent classes. The objective in this section is to find clustering assignments that yield good value when none are suggested.

\begin{definition}
An \emph{optimal $C$ clustering assignment} for a TI-MDP is $\mathcal{C}^*$ such that the resulting optimal value is maximal;

\ifarxiv\else\vspace{-4mm}\fi\small\begin{gather} \label{optcluster}
    \C^*\in\argmax_{\C}\ V^*(x)\\
    \text{s.t. }\ |\C|=C,\ \C_1\cup\dots\cup\C_C =\N,\ \C_u\cap\C_v=\emptyset\ \nonumber\\
    \forall u\neq v,\  u,v\in[1,C].\nonumber
\end{gather}
\end{definition}

\normalsize
Note that \eqref{optcluster} requires $O(C^N)$ evaluations if solved by enumeration, and is known to be NP-hard \cite{calinescu2007maximizing}. In this section, an approximate solution to the optimization problem \eqref{optcluster} will be explored using submodularity. Let $V(\C_k)$ mean the vector of values associated with clustering assignment $\C$ that consists of $k$ clusters. Let $\C^*_k$ be the optimal clustering assignment of $k$ clusters, and let $\widehat{\C}_k$ be the clustering assignment of $k$ clusters as found by an approximate algorithm. All following discussion will be for separable TI-MDPs.

The first definition is for submodular functions, often used in discrete optimization, and it means the benefit of adding elements to a set decreases as the set grows in size. Maximizing a submodular function means that the possible improvement between the $k$ and $k+1$ indices is bounded. 

\begin{definition}
Given a finite set $E$, a \emph{submodular} function $f: 2^E\to\R$ is such that for every $A\subseteq B\subseteq E$ and $i\in E \setminus B$,

\ifarxiv\else\vspace{-2mm}\fi\small\begin{equation}
    f(A\cup \{i\})-f(A)\geq f(B\cup \{i\})-f(B).
\end{equation}
\normalsize
Furthermore, $f$ is \emph{monotone} if $f(B)\geq f(A)$.
\end{definition}

\ifarxiv\else\vspace{-2mm}\fi\subsection{Submodularity of Value Functions}
We first show that value function for certain TI-MDPs is a submodular function. With the following result, the optimization problem \eqref{optcluster} may be approached with techniques developed for submodular function.

\begin{lemma}
\textbf{Submodularity of TI-MDP.} Consider an agent-factorable TI-MDP with $r(x,\abold)$ submodular with respect to the number of agents. Then the value function $V(x)$ is also submodular with respect to the number of agents. I.e., for sets of agents $A\subseteq B\subseteq \N$ and agent $n\in \N\setminus B$, \label{submodular value}

\ifarxiv\else\vspace{-2mm}\fi\small\begin{equation}
    V^{A\cup\{n\}}(x)-V^A(x)\geq V^{B\cup \{n\}}(x)-V^B(x), \label{submdp}
\end{equation}
\normalsize
where $V^D(x)$ is the value function of the MDP comprised of the agents in set $D$. 
\end{lemma}

\begin{proof}
$V(s) = \lim_{k\to\infty} \Tb^k V_{0}(x)$, so it suffices to show that \eqref{submdp} holds for any $k$th application of the Bellman operator. 
\ifarxiv
\small\begin{align*}
    V_1^A(x) = r(x_A,\abold);\ V_1^{A\cup \n}=r(x_{A\cup \n},\abold);\\
    V_1^B(x) = r(x_B,\abold);\ V_1^{B\cup \n}=r(x_{B\cup \n},\abold).
\end{align*}
\else
On the first application, $V_1^A(x) = r(x_A,\abold)$, $V_1^{A\cup \n}=r(x_{A\cup \n},\abold)$, $V_1^B(x) = r(x_B,\abold)$, $V_1^{B\cup \n}=r(x_{B\cup \n},\abold)$.
\fi
\normalsize
As $r(x,\abold)$ is given to be submodular, the claim will hold for $k=1$. Now show for general $k$.

\small
\ifarxiv\else\vspace{-4mm}\fi
\ifarxiv
\begin{align*}
    &\Tb V_k^{B\cup \n}(x) - \Tb V_k^B(x)\\
        &= \sum_{x_{B\cup \n}'}p(x_{B\cup \n}'|x,\abold)[r(x_{B\cup \n},\abold)+\gamma V^{B\cup \n}_{k-1}(x')]-\sum_{x_{B}'}p(x_{B}'|x,\abold)[r(x_{B},\abold)+\gamma V^B_{k-1}(x')]\\
        &=\sum_{x_B'}p(s_B'|x,\abold)\Big[\sum_{x_{\n}'}p(x_{\n}'|x,\abold)[r(x_{B\cup \n},\abold)+\gamma V^{B\cup \n}_{k-1}(x')] - [r(x_{B},\abold)+\gamma V^B_{k-1}(x')]\Big]\\
        &=\sum_{x_B'}p(x_B'|x,\abold)\Big[\sum_{x_{\n}'}p(x_{\n}'|x,\abold)\Big[r(x_{B\cup \n},\abold)-r(x_{B},\abold)+\gamma \Big(V^{B\cup \n}_{k-1}(x')- V^B_{k-1}(x')\Big)\Big]\Big]\\
        &\leq\sum_{x_B'}p(x_B'|x,\abold)\Big[\sum_{x_{\n}'}p(x_{\n}'|x,\alpha)\Big[r(x_{A\cup \n},\abold)-r(x_{A},\abold)+\gamma \Big(V^{A\cup \n}_{k-1}(x')- V^A_{k-1}(x')\Big)\Big]\Big]\\
        &=\sum_{x_A'}p(x_A'|x,\abold)\sum_{x_{B\setminus A}'}p(x_{B\setminus A}'|x,\abold)\Big[\sum_{x_{\n}'}p(x_{\n}'|x,\abold)\Big[r(x_{A\cup \n},\abold)-r(x_{A},\abold)+\gamma \Big(V^{A\cup \{n\}}_{k-1}(x')- V^A_{k-1}(x')\Big)\Big]\Big]\\
        &=\sum_{x_A'}p(x_A'|x,\abold)\Big[\sum_{x_{\n}'}p(x_{\n}'|x,\abold)\Big[r(x_{A\cup \n},\abold)-r(x_{A},\abold)+\gamma \Big(V^{A\cup \n}_{k-1}(x')- V^A_{k-1}(x')\Big)\Big]\Big]\\
    &=\Tb V_k^{A\cup \n}(x) - \Tb V_k^A(x)
\end{align*}
\else
\begin{align*}
    &\Tb V_k^{B\cup \n}(x) - \Tb V_k^B(x)\\
    \begin{split}
        = \sum_{x_{B\cup \n}'}p(x_{B\cup \n}'|x,\abold)[r(x_{B\cup \n},\abold)+\gamma V^{B\cup \n}_{k-1}(x')]\\
        -\sum_{x_{B}'}p(x_{B}'|x,\abold)[r(x_{B},\abold)+\gamma V^B_{k-1}(x')]
    \end{split}\\
    \begin{split}
        =\sum_{x_B'}p(s_B'|x,\abold)\Big[\sum_{x_{\n}'}p(x_{\n}'|x,\abold)[r(x_{B\cup \n},\abold)\\
        +\gamma V^{B\cup \n}_{k-1}(x')]  [r(x_{B},\abold)+\gamma V^B_{k-1}(x')]\Big]
    \end{split}\\
    \begin{split}
        =\sum_{x_B'}p(x_B'|x,\abold)\Big[\sum_{x_{\n}'}p(x_{\n}'|x,\abold)\Big[r(x_{B\cup \n},\abold)-r(x_{B},\abold)\\
        +\gamma \Big(V^{B\cup \n}_{k-1}(x')- V^B_{k-1}(x')\Big)\Big]\Big]
    \end{split}\\
    \begin{split}
        \leq\sum_{x_B'}p(x_B'|x,\abold)\Big[\sum_{x_{\n}'}p(x_{\n}'|x,\alpha)\Big[r(x_{A\cup \n},\abold)-r(x_{A},\abold)\\
        +\gamma \Big(V^{A\cup \n}_{k-1}(x')- V^A_{k-1}(x')\Big)\Big]\Big]
    \end{split}\\
    \begin{split}
        =\sum_{x_A'}p(x_A'|x,\abold)\sum_{x_{B\setminus A}'}p(x_{B\setminus A}'|x,\abold)\Big[\sum_{x_{\n}'}p(x_{\n}'|x,\abold)\\
        \Big[r(x_{A\cup \n},\abold)-r(x_{A},\abold)+\gamma \Big(V^{A\cup \{n\}}_{k-1}(x')- V^A_{k-1}(x')\Big)\Big]\Big]
    \end{split}\\
    &=\Tb V_k^{A\cup \n}(x) - \Tb V_k^A(x)
\end{align*}
\fi
\normalsize
where the inequality holds due to submodularity of the reward function and the induction hypothesis. By induction, $\Tb V_k^{A\cup \n}(x) - \Tb V_k^A(x)\geq \Tb V_k^{B\cup \n}(x) - \Tb V_k^B(x)$ for all $k$, so take the limit as $k\to\infty$.
\end{proof}

The reward function for an agent-separable TI-MDP can be written in the form $r(x,\abold) = \sum_{n}r_n(x_n,\alpha_n)$. Linear functions are modular (submodular), so $r(x,\abold)$ is submodular and therefore will satisfy the requirements for Lemma \ref{submodular value}.

\begin{lemma}\label{monotonicity}\textbf{Monotonicity of TI-MDPs}: Consider a TI-MDP with $r(x,\abold)$ monotone with respect to the number of agents. Then the value function $V(x)$ is also monotone with respect to the number of agents.
\end{lemma}
\begin{proof} 
\ifarxiv
This again can be shown via induction.
\begin{align*}
    V_1^B(x)-V_1^A(x) = r(x_B,\abold)-r(x_A,\abold)\geq 0
\end{align*}
which holds by the given monotonicity.

Induction on $k$:
\small
\begin{align*}
    &\Tb V_k^B(x)-\Tb V_k^A(x) \\
    &=\sum_{x'_A}p(x'_B|x,\abold)\Big[\sum_{x'_{B\setminus A}}p(x'_{B\setminus A}|x,\abold)\Big[r(x_B,\abold)+\gamma V_{k-1}^B(x') \Big]-r(x_A,\abold)+\gamma V_{k-1}^A(x')\Big]\\
    &=\sum_{x'_A}p(x'_B|x,\abold)\sum_{x'_{B\setminus  A}}p(x'_{B\setminus A}|x,\abold)\Big[r(x_B,\abold)-r(x_A,\abold)+\gamma (V_{k-1}^B(x') - V_{k-1}^A(x'))\Big]\geq 0\\
\end{align*}
By induction, $\Tb V_k^B\geq \Tb V_k^A$ for all $k$, so therefore $\lim_{k\to\infty} V_k^B(x)\geq \lim_{k\to\infty} V_k^A(x)$ and $V^B(x)\geq V^A(x)$.

\else
See \cite{fiscko2022cluster}.
\fi
\normalsize
\end{proof}

\subsection{Greedy Clustering}
With the notions of submodularity established, we return to the optimal clustering problem. The optimization problem can be restated as, 

\ifarxiv\else\vspace{-4mm}\fi\small\begin{gather}
    \max_{\mathcal{C}} \sum_{c=1}^C V^{*}(x_c)\label{optclusters},\\
    \text{s.t. }\ |\C|=C,\ \C_1\cup\dots\cup\C_C =\N,\ \C_u\cap\C_v=\emptyset\ \nonumber\\
    \forall u\neq v,\  u,v\in[1,C].\nonumber
\end{gather}
\normalsize
As discussed in the introduction, \eqref{optclusters} can be represented as an optimization problem subject to a partition matroid constraint \cite{calinescu2007maximizing}. While that method provides a close approximation to the optimal, it was not formulated for MDPs and requires a complicated implementation for approximation subroutines. In this section we will explore a simple construction that builds upon the optimality of CVI queries on separable systems. 

In \cite{zhao2005greedy}, Zhao et. al. suggest a Greedy Splitting Algorithm (GSA) to solve \eqref{optclusters} for minimization that achieves an approximation $f(\hC)\leq (2-\frac{2}{k})f(\C^*)$ for monotonone submodular $f$ in $O(kN^3\theta)$ time where $\theta$ is the time to query $f$. This approach begins with all the agents as one cluster, and then takes the best possible refinement of $k-1$ existing clusters to propose $k$ clusters. Note that this method only requires computation of $V$, which is a perfect use case for CVI. Adapting GSA to a maximization problem, however, loses the error bound guarantees due to non-negativity constraints. In general, the maximization of submodular functions subject to a cardinality constraint is harder than minimization \cite{calinescu2007maximizing} and includes the maximum coverage problem.
 
Nevertheless, we propose adapting GSA for value maximization of cluster assignments because it guarantees value improvement at each iteration and it only requires calls to a CVI subprocess. A maximization version of GSA is presented in Algorithm \ref{GSAR} and it will be shown that the algorithm finds clustering assignments whose values form a submodular lower bound of the optimal values of \eqref{optclusters}. 

In general, $C$ will refer to the desired final number of clusters and $k$ will be the number clusters at an intermediate step. The initialization defines one cluster of all the agents. At each subsequent query for new clusters, the algorithm searches for some existing cluster $U$ and a split $\{U- X, X\}$ that provides the most value improvement. This is repeated until $C$ clusters are achieved and the final value is returned. The number of possible sets formed by splitting $n$ elements split into $k$ disjoint subsets may found by the Stirling number of the second kind $S(n,k)$. Therefore the number of possible splits across some clustering assignment is $\sum_{c=1}^CS(|\C_c|,2)$.

\begin{algorithm}
\SetAlgoLined
 $\hC_1 = \{\N\}$\;
 $\widehat{V}_1 = \text{CVI}(\hC_1)$\;
 \For{$k\in\{2,\dots,C\}$}{
 $(X_{k}, U_{k-1})\gets \text{argmax} \{V(X)+V(U- X) -V(U)\ |\ \emptyset\subset X\subset U,\ U\in \hC_{k-1}\}$\;
 $\hC_{k} \gets \{\hC_{k-1}- U_{k-1}\}\cup \{X_k, U_{k-1}- X_k\}$\;
 $\widehat{V}_k = \text{CVI}(\hC_k)$\;}
 \caption{Greedy Splitting Algorithm - Reward}
 \label{GSAR}
\end{algorithm}

\begin{remark} The optimal values for an optimal $k$-clustering is lower bounded by the optimal value for a clustering found by GSA-R: $V^*(\C^*_k)\geq V^*(\hC_k)$.

The clustering assignments found by GSA-R are optimal for $k=1, 2, N$, i.e.  $\hC_1=\C^*_1, \hC_2=\C^*_2, \hC_N=\C^*_N$. For $k=2$, this can be seen as optimizing over the set of 2-cluster assignments is equivalent to optimizing over one split.
\end{remark}

\begin{lemma}\label{splitting}
\textbf{Monotonicity of Splitting.} Consider a  TI-MDP with a $k$-clustering assignment. The optimal value $V^*(\C_{k+1})$ will at least as much as $V^*(\C_k)$ if the $k+1$ cluster is formed by splitting one of the existing $k$ clusters. 
\end{lemma}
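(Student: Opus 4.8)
The plan is to show that refining a clustering can only enlarge the class of realizable control policies, so the optimal value cannot decrease. Write $\C_{k+1} = (\C_k \setminus \{U\}) \cup \{X,\, U\setminus X\}$, where $U \in \C_k$ is the cluster that gets split and $\emptyset \subsetneq X \subsetneq U$; every other cluster is left untouched. The goal is then $V^*(\C_{k+1})(s) \ge V^*(\C_k)(s)$ for all $s$, which I would obtain by exhibiting a single $\C_{k+1}$-policy that matches the optimal $\C_k$-value.

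First I would take an optimal policy $\Pi = \{\pi_c\}_{c\in\C_k}$ for $\C_k$ and lift it to a policy $\Pi'$ for $\C_{k+1}$: keep $\pi'_c = \pi_c$ for every cluster $c \neq U$, and set $\pi'_X(s) = \pi'_{U\setminus X}(s) = \pi_U(s)$ for all $s$, i.e. hand the two new sub-clusters exactly the action their parent would have received. This is a legitimate $\C_{k+1}$-policy because the agents in $X$ and in $U\setminus X$ may each be sent any control available to $U$. The key point of the construction is that under $\Pi'$ every agent $n\in\N$ observes exactly the same control $\abold_n$ in every state as it does under $\Pi$.

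Next I would invoke the model structure. Since the TI-MDP is agent-independent, $p(s'\mid s,\abold) = \prod_{n\in\N} p(s_n'\mid s,\abold_n)$ depends on the control only through each agent's observed action; and for a separable (in particular agent-separable) reward, $r(s,\abold) = \sum_{n}r(s_n,\abold_n)$ likewise depends only on the per-agent observed actions. Hence the controlled Markov chain and the stage rewards induced by $(\C_{k+1},\Pi')$ are identical to those induced by $(\C_k,\Pi)$, so $V^{\Pi'}(s) = V^{\Pi}(s) = V^*(\C_k)(s)$ for every $s$. Since $\Pi'$ is one feasible policy for $\C_{k+1}$, optimality gives $V^*(\C_{k+1})(s) \ge V^{\Pi'}(s) = V^*(\C_k)(s)$, which is the claim; this is essentially the mechanism behind Lemma \ref{le:improvement}.

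An equivalent route that stays inside this section's machinery is to use the separable decomposition $V^*(\C) = \sum_{c\in\C} V^{c,\ast}$ from Lemma \ref{separability} together with Theorem \ref{optimality}: the claim reduces to $V^{X,\ast}(s) + V^{U\setminus X,\ast}(s) \ge V^{U,\ast}(s)$, which again follows by restricting the sub-MDP for $U$ to policies that assign $X$ and $U\setminus X$ a common action. I do not expect a genuine obstacle here; the only care needed is the bookkeeping that makes $\Pi'$ a well-defined $\C_{k+1}$-policy and the explicit remark that transitions and rewards see the control only through each agent's own received action --- both immediate from Definition \ref{TI} and separability. The one conceptual point worth stating is that refining the partition does not alter any agent's transition kernel $p(s_n'\mid s,\cdot)$ or its reward contribution; it only relaxes the constraint that certain agents must share a control, and a relaxed constraint cannot lower the optimum.
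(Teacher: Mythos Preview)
Your argument is correct and is essentially the same idea as the paper's: both rest on the observation that any $\C_k$-policy can be realized as a $\C_{k+1}$-policy by assigning the two new sub-clusters the common action their parent would have received, so the optimum can only go up. The paper packages this as an induction on $k$ via the Bellman optimality equation (restricting the outer $\max$ to the diagonal $\alpha_{c_1'}=\alpha_{c_2'}$), whereas you give the one-shot policy-lifting version, which is cleaner and avoids the induction entirely.

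One small point: you invoke separability of $r$ to argue that the reward depends on the control only through each agent's observed action, but the paper's version does not use separability here and in fact remarks right after the proof that Lemma~\ref{splitting} holds for non-separable TI-MDPs as well. Your transition argument already handles the general case; for rewards it suffices to note that $r(s,\abold)$ is a function of the state and the per-agent received actions, so once $\Pi'$ delivers the same action to every agent as $\Pi$, the rewards coincide regardless of additivity. Dropping the separability clause would make your statement match the paper's full generality.
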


\begin{proof}
\ifarxiv
Proof by induction.

Base case: Assume there are $k=2$ clusters, $c_1$ and $c_2$.  The value of a state under $\pi^*$ is,
\begin{equation*}
    V^*_{k=2}(x) = \max_{
    \alpha_1, \alpha_2}\sum_{x'\in\X} p(x'|x,\{\alpha_{c_1},\alpha_{c_2}\}) (r(x,\{\alpha_{c_1},\alpha_{c_2}\})+\gamma V^*(x')).
\end{equation*}
If $a_{c_1} = a_{c_2}$, then this is equivalent to the one-cluster case and the same value is recovered $V^*_{k=2}(x) =V^*_{k=1}(x)$.

Else if $a_{c_1} \neq a_{c_2}$ then,
\begin{align*}
    &V^*_{k=2}(x) = \max_{\alpha_{c_1}, \alpha_{c_2}}\sum_{x'\in\X} p(x'|x,\{\alpha_{c_1},\alpha_{c_2}\})(r(x,\{\alpha_{c_1},\alpha_{c_2}\})+\gamma V^*(x')),\\
    &\geq  \max_{\alpha}\sum_{x'} p(x'|x,\{\alpha,\alpha\})(r(x,\{\alpha,\alpha\})+\gamma V^*(x'))=V^*_{k=1}(x).
\end{align*}
Inductive step: Say there are $k'$ clusters and that the controller uses policy $\Pi^*$. Take cluster $c'$ and split it into two clusters, $\{c_1', c_2'\}$ where $c_1'\cup c_2'=\emptyset$ and $c_1'\cap c_2' = c'$. This forms $k'+1$ clusters. Again if $\alpha_{c_1'}=\alpha_{c'}$ and $\alpha_{c_2'}=\alpha_{c'}$, then we recover the same value $V^*_{k'}(x)=V^*_{k'+1}(x)$. However if $\alpha_{c_1'}\neq\alpha_{c'}$ or $\alpha_{c_2'}\neq\alpha_{c'}$, then,
\begin{align*}
    &V^*_{k'+1}(x)\\
    &=\max_{\alpha_1,\dots,\alpha_{c_1'},\alpha_{c_2'}}\sum_{x'} p(x'|x,\{\alpha_1,\dots,\alpha_{c_1'},\alpha_{c_2'}\}) \times( r(x,\{\alpha_1,\dots,\alpha_{c_1'},\alpha_{c_2'}\})+\gamma V^*(x')),\\
    &\geq \max_{\alpha_1,\dots,\alpha_{k'},\alpha_{k'}}\sum_{x'} p(x'|x,\{\alpha_1,\dots,\alpha_{k'},\alpha_{k'}\}) \times (r(x,\{\alpha_1,\dots,\alpha_{k'},\alpha_{k'}\})+\gamma V^*(x')),\\
    &=V^*_{C'}(x).
\end{align*}
\else
See \cite{fiscko2022cluster}.
\fi
\end{proof}
\normalsize
This result on the splittings further shows that the resulting values for the optimal clustering problem \eqref{optclusters} monotonically improve for increasing $k$.

\begin{lemma} \textbf{Monotonicity of Optimal Clusterings.} Consider a TI-MDP. The series of optimal values for each optimal clustering assignment is a monotonically increasing sequence: \label{monotone clusters}

\ifarxiv\else\vspace{-5mm}\fi\small\begin{align}
    V^*(\C^*_1)\leq \dots\leq V^*(\C^*_k)\leq \dots\leq V^*(\C^*_C).
\end{align}
\end{lemma}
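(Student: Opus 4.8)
The plan is to bootstrap from the single-split monotonicity of Lemma~\ref{splitting} together with the defining optimality of $\C^*_{k+1}$. Everything is argued pointwise in the state $s\in\Scal$, exactly as in the earlier results (Lemma~\ref{le:improvement}, Lemma~\ref{splitting}); if the maximizing clustering is allowed to depend on $s$, the argument goes through verbatim.

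Fix $k$ with $1\le k < C$. Since a clustering into $C$ nonempty disjoint blocks of $\N$ forces $C\le N$, we have $k<N$, so the optimal $k$-clustering $\C^*_k$ cannot consist only of singletons: at least one block $U\in\C^*_k$ has $|U|\ge 2$. Choose any nonempty proper subset $X\subset U$ and form the refined clustering $\C' = \{\C^*_k\setminus U\}\cup\{X,\,U\setminus X\}$, which has $k+1$ blocks. Because $\C'$ is obtained from $\C^*_k$ by splitting a single existing block, Lemma~\ref{splitting} gives $V^*(\C')(s)\ge V^*(\C^*_k)(s)$ for every $s$. Now $\C'$ is a feasible $(k+1)$-clustering, so by the definition of $\C^*_{k+1}$ as a maximizer of $V^*$ over all $(k+1)$-clusterings, $V^*(\C^*_{k+1})(s)\ge V^*(\C')(s)$. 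Chaining these two inequalities yields $V^*(\C^*_{k+1})(s)\ge V^*(\C^*_k)(s)$, and applying it for each $k=1,\dots,C-1$ produces the asserted chain $V^*(\C^*_1)\le\cdots\le V^*(\C^*_C)$.

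There is no genuinely hard step here once Lemma~\ref{splitting} is available; the proof is simply ``refine the current optimum by one split (value does not decrease) and then pass to the optimum at the next level (value can only increase).'' The one detail worth flagging is the need to exhibit a \emph{splittable} block, which is exactly why the counting observation $k<N$ (valid since $k<C\le N$) is invoked; and, as noted, the pointwise-versus-uniform reading of ``the optimal clustering'' is immaterial because every inequality used holds state by state.
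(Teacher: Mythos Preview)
Your argument is correct and matches the paper's own proof essentially line for line: split $\C_k^*$ once (Lemma~\ref{splitting} gives no decrease), then invoke optimality of $\C_{k+1}^*$ among all $(k{+}1)$-clusterings to get the second inequality. The paper omits the $k<N$ justification for the existence of a splittable block, so your write-up is in fact slightly more careful.
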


\normalsize
\begin{proof}
By Lemma \ref{splitting}, $V^*(\C_k^*)\leq V^*(\C_{k+1})$ where $\C_{k+1}$ is a split of $\C_k^*$. Then $V^*(\C_{k+1})\leq V^*(\C_{k+1}^*)$.
\end{proof}

Note that Lemmas \ref{splitting} and \ref{monotone clusters} hold for non-separable TI-MDPs. For the separable case, we can find better performance guarantees in that the values have diminishing returns for increased numbers of clusters.      

\begin{theorem}\label{submodular} \textbf{Performance of GSA-R}: Consider a TI-MDP with $r(x,\abold)$ submodular with respect to the number of agents. The sequence of $\hC_k$ found by GSA-R satisfies,

\ifarxiv\else\vspace{-2mm}\fi\small\begin{align*}
    V^*(\hC_{k+1})-V^*(\hC_{k})\leq V^*(\hC_{k})-V^*(\hC_{k-1}).
\end{align*}
\end{theorem}
\normalsize
\begin{proof} In this proof, all value functions are assumed to be evaluated for the optimal policy.

\emph{Base Case:} We need to show $V(\hC_2)-V(\hC_1)\leq V(\hC_1)-V(\hC_0)$, but for zero clusters $V(\hC_0) = 0$, so we need to show $V(\hC_2)-V(\hC_1)\leq V(\hC_1)$. Denote the set $\hC_2=\{U, W\}$. By Lemma \ref{monotonicity}, $V(\hC_1)\geq V(U)$ and $V(\hC_1)\geq V(W)$. Thus $V(\hC_2) = V(U)+V(W) \leq 2V(\hC_1)$ as desired. 

\emph{Induction:} The goal is to show that $V(\hC_{k+1})-V(\hC_{k})\leq V(\hC_k)-V(\hC_{k-1})$. Let $U_k\in\hC_k$ be the cluster that is split into $X_{k+1}$ and $Y_{k+1}$ to form $\hC_{k+1}$, i.e. $X_{k+1}\cup Y_{k+1} = U_k$, $X_{k+1}\cap Y_{k+1} = \emptyset$. Then either $U_k\subset U_{k-1}$ or $U_k\not\subset U_{k-1}$.

\ifarxiv\else\vspace{-2mm}\fi\small\begin{align*}
    V(\hC_{k+1})-V(\hC_k) = V(X_{k+1}) + V(Y_{k+1}) - V(U_{k})
\end{align*}
\normalsize
If $U_k\not\subset U_{k-1}$ then optimally splitting $U_{k-1}$ provided a bigger value increase than optimally splitting $U_k$. This is because both sets were available to select to split at stage $k-1$, but the greedy algorithm selected $U_{k-1}$. Thus $V(X_k)+V(Y_k)-V(U_{k-1})\geq V(X_{k+1})+V(Y_{k+1})-V(U_k)$ and therefore $V(\hC_k)-V(\hC_{k-1})\geq V(\hC_{k+1})-V(\hC_k)$.

In the other case, $U_{k-1}$ is split, and then $X_k$ or $Y_k$ are selected as the next $U_k$. Say that $X_k=U_{k-1}$. 
Need to show:

\ifarxiv\else\vspace{-2mm}\fi\small\begin{align}
    &V(X_{k+1})+V(Y_{k+1})-V(X_{k+1}\cup Y_{k+1})\nonumber\\
    &\leq V(X_k)+V(Y_k)-V(X_k\cup Y_k)\nonumber\\
    &=V(X_{k+1}\cup Y_{k+1}) + V(Y_k)- V(X_{k+1}\cup Y_{k+1}\cup Y_k)\label{star}
\end{align}
\normalsize
By submodularity, $V(X_{k+1} \cup Y_{k+1})-V(X_{k+1}) -V(Y_{k+1}) \geq V(X_{k+1} \cup Y_{k+1} \cup Y_k) - V(X_{k+1}\cup Y_k) - V(Y_{k+1})$. Therefore to claim \eqref{star}, we need $V(X_{k+1}\cup Y_{k+1})+V(Y_k)\geq V(X_{k+1}\cup Y_k)+V(Y_{k+1})$. 

Recall that $X_{k+1}\cup Y_{k+1}\cup Y_k = U_{k-1}$. The above equation shows the value of two possible splittings of $U_{k-1}$: the LHS is the split $\{X_{k+1}\cup Y_{k+1}, Y_k\}$ and the RHS is the split $\{X_{k+1}\cup Y_k, Y_{k+1}\}$. As the algorithm selects the best splits in a greedy fashion, the LHS is a better split and thus has a greater value. Thus \eqref{star} holds, and so does the original claim. 
\end{proof}


\begin{remark}
This diminishing returns property may be used as a stopping criterion when a desired final number of clusters is not known \emph{a priori}. For example, consider a clustering $\hC_k$ with known $V^*(\hC_{k})$. The clusterings at the previous iterations are known, so  $V^*(\hC_{k-1})$ is known and thus $V^*(\hC_{k})-V^*(\hC_{k-1})=\delta_k$ is known. By Theorem \ref{submodular}, $\delta_k \geq \delta_{k+1}$. Therefore, if $\delta_k$ is sufficiently small, the user may decide to terminate the algorithm and not compute $\hC_{k+1}$.
\end{remark}

\ifarxiv\else\vspace{-3mm}\fi\section{Examples} \label{sims}
In this section the CVI and GSA-R algorithms are explored in simulation. The first example compares the performance of VI and CVI, and the second demonstrates GSA-R for clustering. The final scenario is an application where agents self-assign to channels subject to bandwidth and costs constraints.

\ifarxiv\else\vspace{-3mm}\fi\subsection{VI vs CVI Examples}
Figures \ref{fig-nonsep} and \ref{fig-sep} compare CVI and VI on non-separable and separable systems, respectively. Both systems had $N=7$ agents with binary choices, giving a state space a size of $2^7$. The CP had $3$ actions that were assigned per-cluster, yielding an action space of size $3^C$. Transition matrices and reward functions of the form $r(x)$ were randomly generated.

The graphs display $\|V^*\|$ averaged over all clustering assignments for each fixed $C$ and normalized by $\|V^*_{C=7}\|$. The value for $C=1$ is a baseline to demonstrate the increase in value attained with a clustered policy. The black dot on shows $\|\Vhat\|$ achieved via CVI, again averaged across all clustering assignments, normalized, and shown with error bounds. Clearly, CVI finds optimal values for separable systems. 

\begin{figure}
  \ifarxiv
  \centerline{\includegraphics[width=0.5\linewidth]{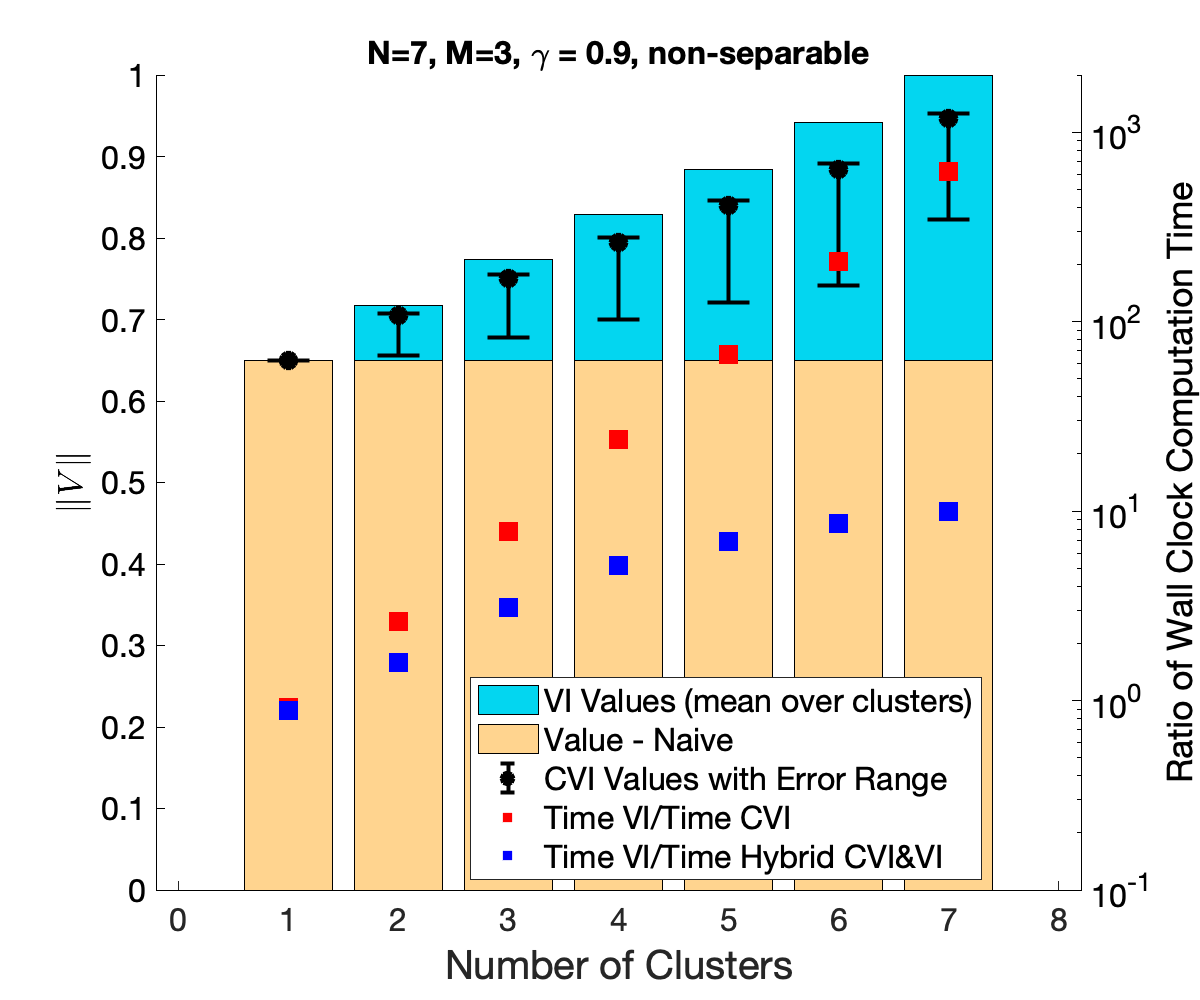}}
  \else
  \centerline{\includegraphics[width=0.7\linewidth]{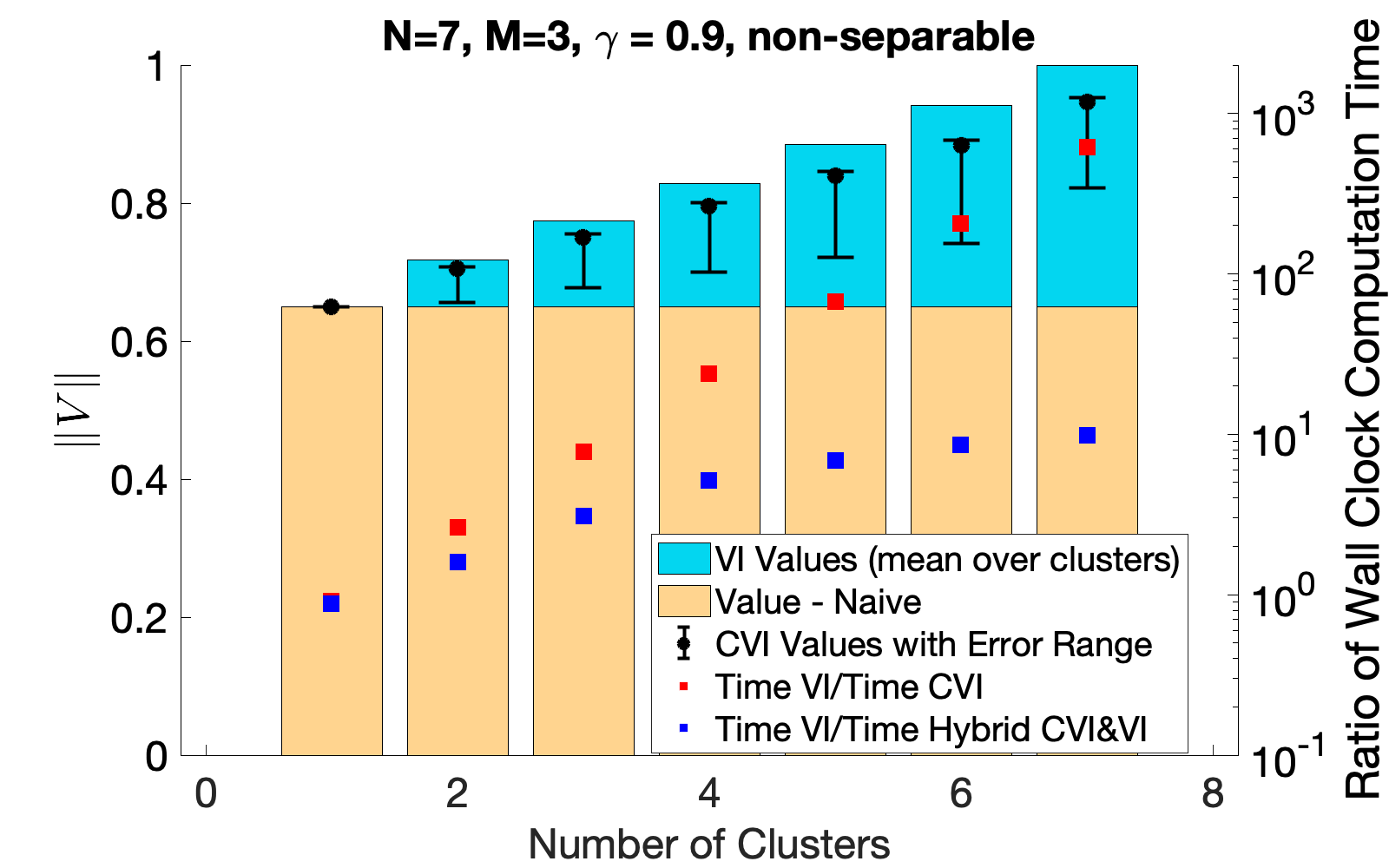}}
  \fi
  \caption{Performance of cluster-based control policies on a non-separable system for different $C$. Note that cluster-based control improves the attained value, and CVI is able to approximate the true optimal value with significant computation time savings. }
  \label{fig-nonsep}
\end{figure} 

\begin{figure}
  \ifarxiv
  \centerline{\includegraphics[width=0.5\linewidth]{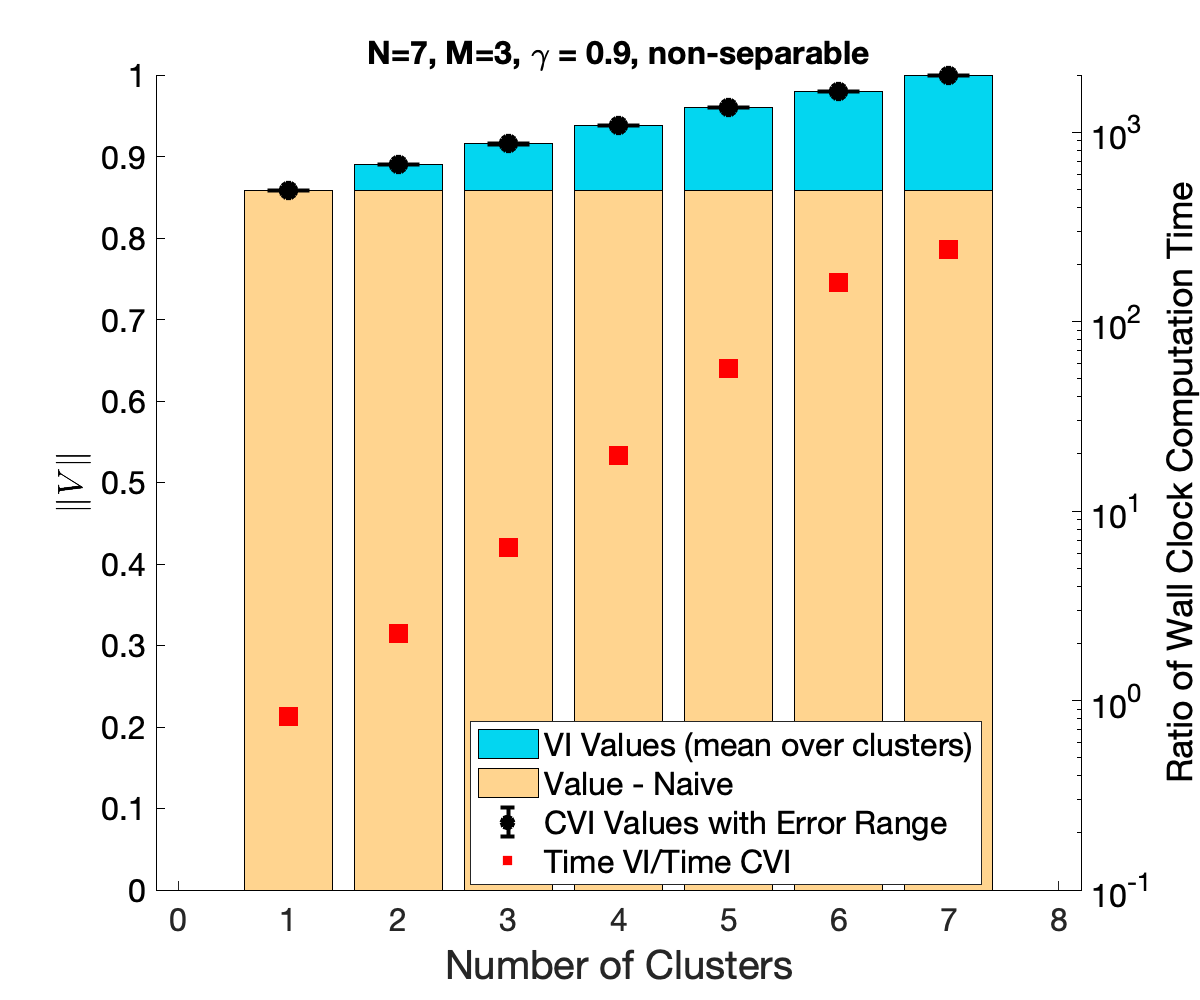}}
  \else
  \centerline{\includegraphics[width=0.7\linewidth]{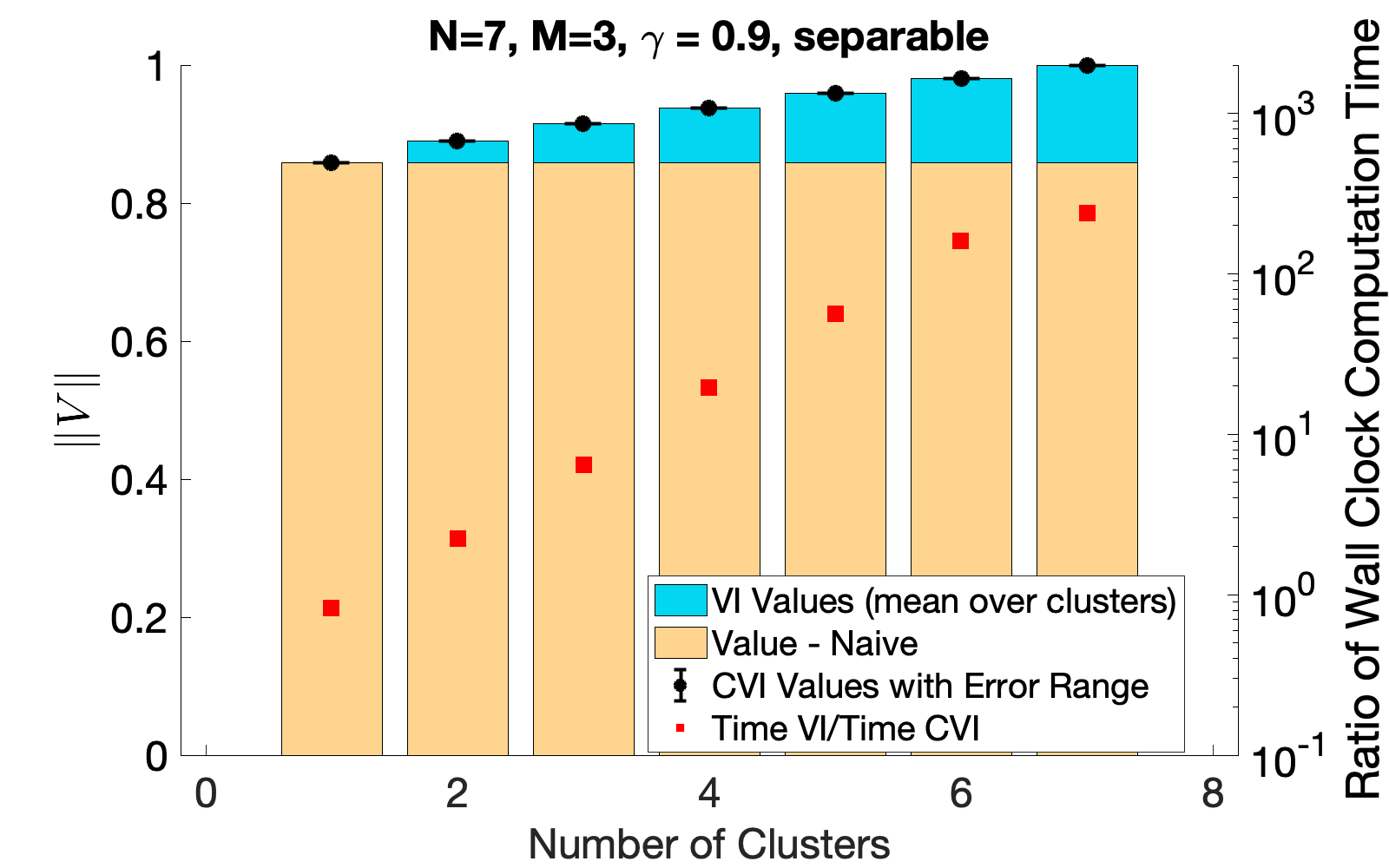}}
  \fi
  \caption{Performance of cluster-based control policies on a separable system for different $C$. Note that in for TI-MDPs with separable reward functions, CVI recovers the true optimal value exponentially faster than standard VI.}
  \label{fig-sep}
\end{figure}

The simulations were written in MATLAB, with the only difference in implementation being that VI optimized over the whole action space at each iteration, whereas CVI optimized over a subset of actions. The red squares show the ratio of the wall clock computation time for VI and CVI. These results confirm that CVI's computation time is independent of the number of clusters; VI depends exponentially. For example, consider Fig \ref{fig-nonsep} for $C=7$, where CVI converged about 620x faster than VI. In total computation time, CVI for $C=7$ was 1.2x slower than CVI for $C=1$, whereas VI for $C=7$ was 825x slower than VI for $C=1$.

Finally, the blue squares show the ratio of the wall clock computation for VI and hybrid CVI/VI. The stopping conditions used were $\delta=10^{-4}$, $\epsilon=10^{-5}$. The average number of VI calls used in the hybrid implementation is shown in Fig \ref{fig:vi in hybrid}. The time of hybrid CVI/VI scaled polynomially with respect to the number of clusters, whereas CVI scaled quadratically. 

\begin{figure}
    \centering
    \begin{tabular}{ |c|c|c|c|c|c|c| } 
     \hline
      \small{$C=1$} & \small{$C=2$} & \small{$C=3$} & \small{$C=4$} & \small{$C=5$} & \small{$C=6$} & \small{$C=7$} \\\hline
      2 & 3.12 & 3.36 & 3.70 & 3.87 & 4.0 & 4 \\
     \hline
    \end{tabular}
    \caption{Mean number of full VI iterations used in the hybrid CVI/VI method, where the mean is taken over all the clustering assignments.}
    \label{fig:vi in hybrid}
\end{figure}

\normalsize
\ifarxiv\else\vspace{-3mm}\fi\subsection{Greedy Clustering}
This next example demonstrates clustering via greedy splitting. Fig \ref{fig-gc} shows the results from a randomly generated separable TI-MDP constructed with $N=10$ agents. The bars show normalized $\|V^*\|$ produced by clustering assignments selected by GSA-R. Note that this sequence of values displays the diminishing returns property, thus demonstrating the results of Theorem \ref{submodular}. 

The black dots here are the ratio of the number of clustering assignments checked by a naive exhaustive search versus GSA-R. The naive method has $S(N,C)$ total combinations for each $C$ number of clusters. For GSA-R, only refinements of the previous clustering assignment are considered; for $C= 3,\dots, 8$ this leads to a reduction in the number of evaluations. For $C>8$, solving directly via the naive method is more efficient as $S(N,C)$ is small. These results demonstrate that GSA-R is a structured approach for partitioning the agents.


\begin{figure}
\centering
  \ifarxiv
  \includegraphics[width=0.5\linewidth]{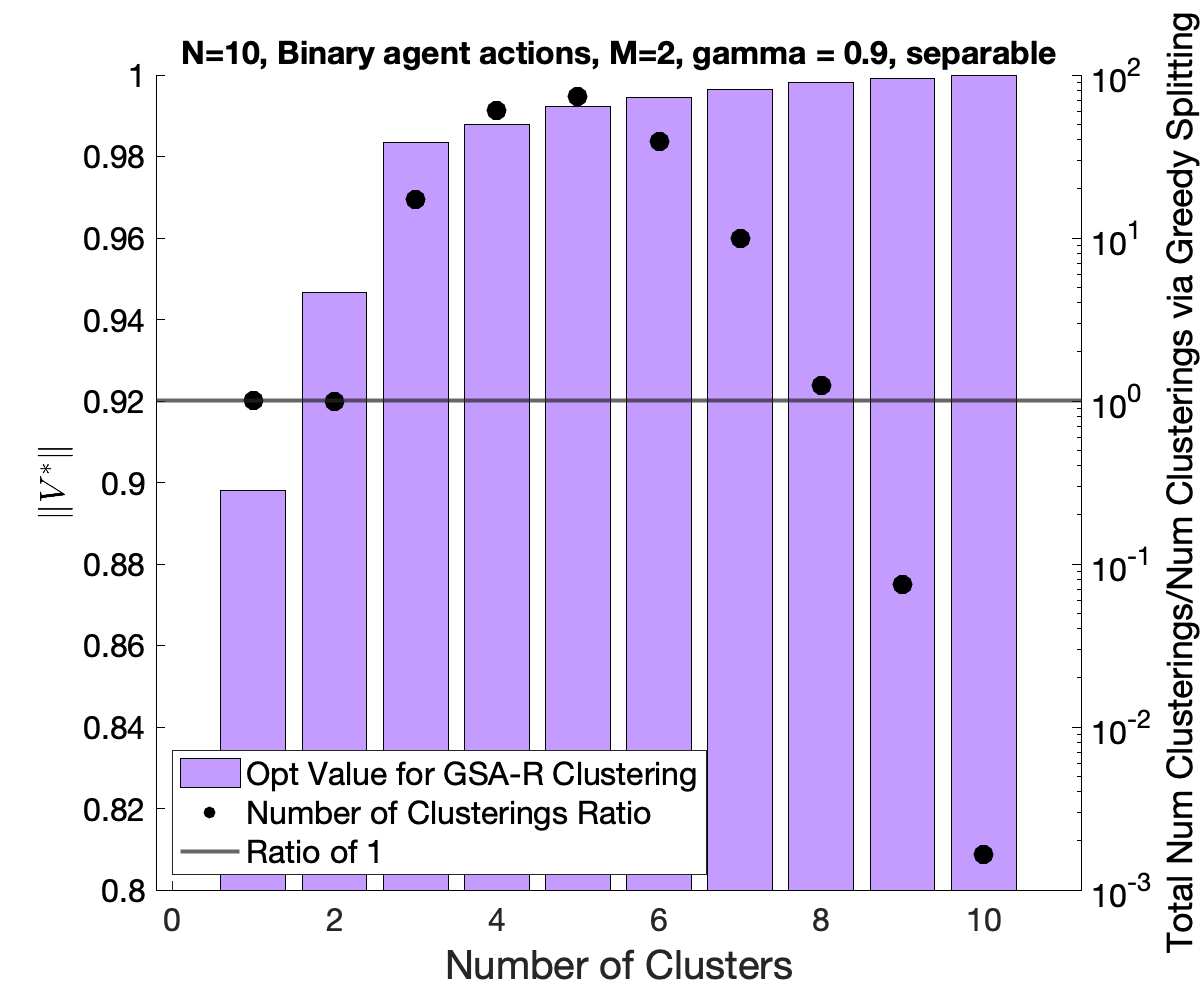}
  \else
  \includegraphics[width=0.67\linewidth]{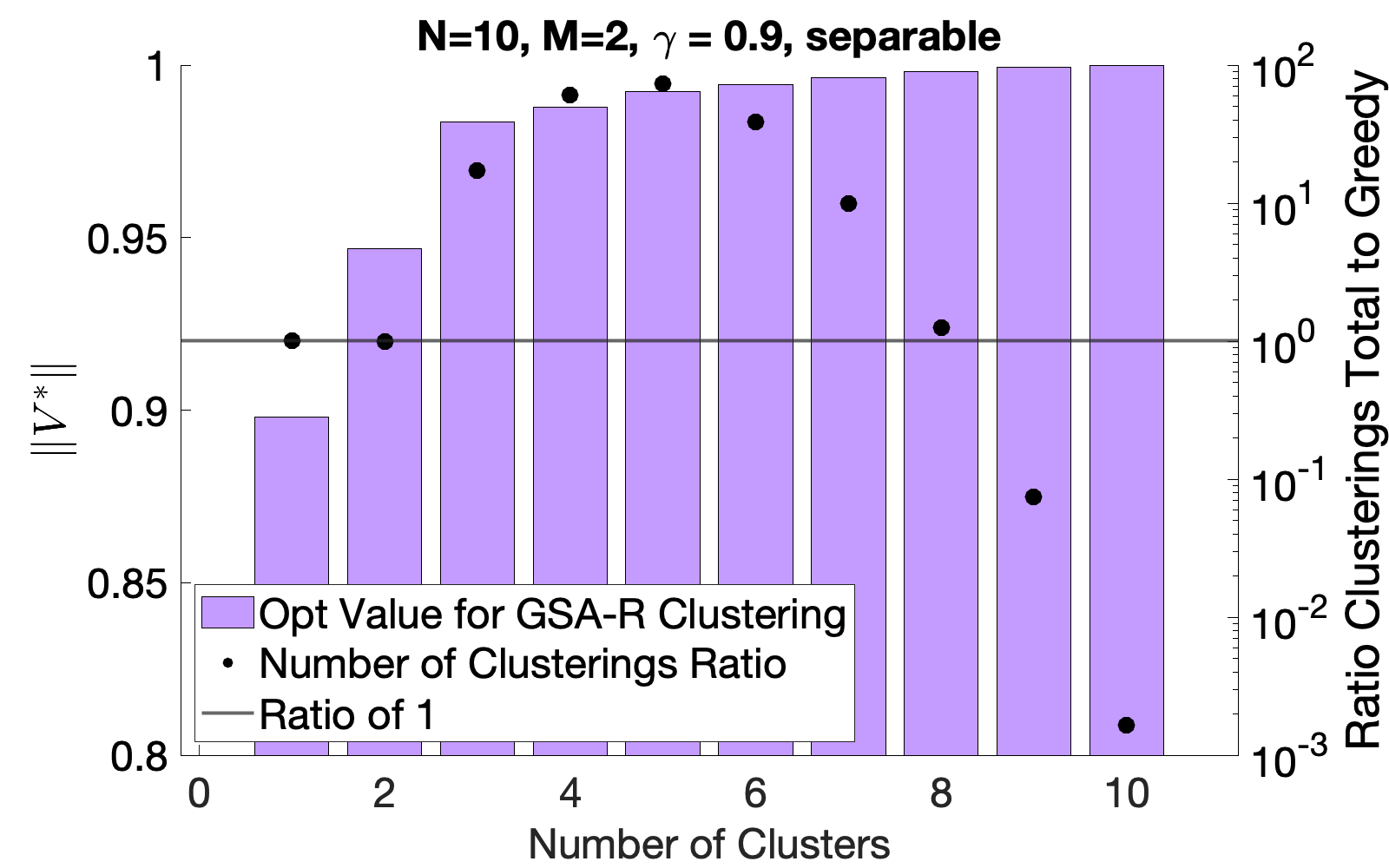}
  \fi
  \caption{Performance of greedy cluster splitting.}
  \label{fig-gc}
\end{figure} 

\ifarxiv\else\vspace{-3mm}\fi\subsection{Channel Assignment Example}
We consider an example where agents assign themselves to channels subject to bandwidth constraints and costs. This scenario will exemplify how a MAS, described as a game, can be abstracted to a MDP framework and controlled.

Let there be channels of low, medium, and high bandwidths (20, 50, 100). Each of $6$ agents selects a channel to use at each time step by evaluating their utility function: $u_n(x) = \frac{b(x)}{1+\sum_{n'\neq n} \mathbbm{1}(x_{n',t}=x)} - \beta_n\nu(x)$ where $b(x)$ is the bandwidth of channel $x$, $\nu(x)$ is the cost, and $\beta_n$ is a random scaling factor in $[0,1]$. This utility function can be interpreted as the effective bandwidth for $n$ had they chosen $x$ minus a scaled cost. All agents are assumed to use a best-response decision process, leading to a large sparse transition matrix. The cost structure, as chosen by the controller, can be one of:

\footnotesize
\begin{center}
\begin{tabular}{ |c|c|c|c| } 
 \hline
 $\nu(x, \alpha)$ & Low BW & Med BW & High BW \\\hline
 $\alpha_1$ & 0 & 10 & 30 \\\hline 
 $\alpha_2$ & 1 & 15 & 50 \\\hline
 $\alpha_3$ & 10 & 50 & 100\\
 \hline
\end{tabular}
\end{center}
\normalsize

The CP selects between (1) Maximum Revenue: The controller aims to maximize their payments from the agents, so $r(x,\abold) =\sum_{n\in\N}  \nu(x_n,\abold_n)$ which is agent-separable. (2) Desired Configuration: The controller wants all agents in the medium bandwidth channel. Their reward function is thus $r(x,
\abold) = \sum_{n\in\N} \mathbbm{1}(x_n = \text{medium})$, which again is separable.

The results of these scenarios are shown in Figure \ref{fig-bw}. Each row is normalized by the maximum value achieved when $C=6$. First, we notice that cluster-based policies improves the performance. The separable reward function enables CVI to find optimal control policies given a clustering assignment chosen by GSA-R. These examples demonstrate scenarios in which agent-separable reward functions enable desirable outcomes. Finally, note that GSA-R implies a natural stopping criterion. The CP can observe the negligible value improvement from two to three clusters, and then from submodularity conclude that the policy for three clusters is adequate.

\begin{figure}[ht]
    \centering
    \footnotesize\begin{tabular}{ |c|c|c|c|c|c|c| } 
     \hline
      $\frac{\|V^*\|}{\|V^*_{C=6}\|}$ & C=1 & C=2 & C=3 & C=4 & C=5 & C=6  \\\hline
      Scenario 1& 0.840  &  0.975   & 0.995  &  0.999 &   0.999   & 1  \\\hline
      Scenario 2 &0.882 &   0.984  &  0.993  &  0.999  &  0.999 &   1\\
     \hline
    \end{tabular}
    \caption{Results of the channel assignment example using CVI and GSA-R on two different separable reward functions.}
    \label{fig-bw}
\end{figure}

\ifarxiv\else\vspace{-4mm}\fi\section{Conclusion}
In this work we consider a CP who controls a MAS via cluster-based policies. Agents assigned to one cluster receive the same control at the same time step, but agents in different clusters may receive different controls. Complexity is a concern as standard techniques to solve for a policy scale with the size of the action space. We proposed clustered value iteration, which takes a ``round-robin" approach to policy optimization across each cluster in turn. This algorithm has a per-iteration complexity that is independent of the number of clusters, and has the same contractive properties of value iteration. We showed that our algorithm converges and it finds optimal policies for TI-MDPs with separable reward functions.

We next examined maximizing the value function by optimally assigning agents to clusters. We showed that TI-MDPs with submodular reward functions have submodular value functions. An iterative greedy splitting technique was then proposed that provided submodular improvement in value with natural stopping conditions and reduced computation versus a naive approach. Examples of both CVI and the GSA-R were explored in simulation under a variety of scenarios. 

While inspired by MAS, this work is relevant to general TI-MDPs with large action spaces. This work furthers understanding of TI-MDP structural properties by showing how the value function and Bellman operator may be split along clusters. Future work can study clustering agents by agent similarity metrics, and look at agents with active learning processes.

\bibliographystyle{IEEEtran}
\ifarxiv\else\vspace{-3mm}\fi\bibliography{IEEEabrv,main}

\ifarxiv
\else

\begin{wrapfigure}{l}{0.35\linewidth}
  \begin{center}
    \includegraphics[width=\linewidth]{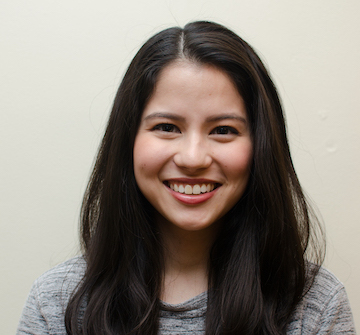}
  \end{center}
\end{wrapfigure}
\textbf{Carmel Fiscko} is a PhD Candidate in Electrical and Computer Engineering at Carnegie Mellon University advised by Professors Soummya Kar and Bruno Sinopoli, where she also received her M.S. in 2019. She received her B.S. in Electrical Engineering in 2017 from the University of California at San Diego. She was selected as a 2019 National Science Foundation Graduate Research Fellow. Her research interests are in understanding decision-making processes in networked systems, signal processing, and machine learning.\\

\begin{wrapfigure}{l}{0.3\linewidth}
  \begin{center}
    \includegraphics[width=\linewidth]{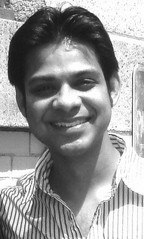}
  \end{center}
\end{wrapfigure}
\textbf{Soummya Kar} received a B.Tech. in electronics and electrical communication engineering from the Indian Institute of Technology, Kharagpur, India, in May 2005 and a Ph.D. in electrical and computer engineering from Carnegie Mellon University, Pittsburgh, PA, in 2010. From June 2010 to May 2011, he was with the Electrical Engineering Department, Princeton University, Princeton, NJ, USA, as a Postdoctoral Research Associate. He is currently a Professor of Electrical and Computer Engineering at Carnegie Mellon University, Pittsburgh, PA, USA. His research interests include signal processing and decision-making in
large-scale networked  systems, machine learning, and stochastic analysis, with applications in cyber-physical systems and smart energy systems. He is a Fellow of the IEEE.

\begin{wrapfigure}{l}{0.35\linewidth}
  \begin{center}
    \includegraphics[width=\linewidth]{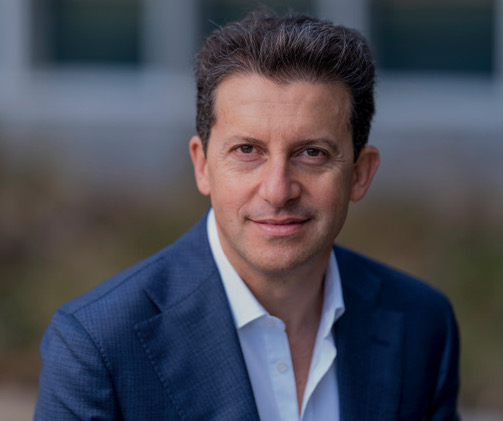}
  \end{center}
\end{wrapfigure}
\textbf{Bruno Sinopoli} is the Das Family Distinguished Professor and Chair of the Preston M. Green Department of Electrical \& Systems Engineering at the McKelvey school of Engineering at Washington University in St Louis.  Prior to joining Washington University, he was a professor in the Electrical and Computer Engineering Department at Carnegie Mellon University from 2007 to 2019, with courtesy appointments in the Robotics Institute and the Mechanical Engineering Department and co-director of the Smart Infrastructure Institute. Previously, he was a postdoctoral fellow at the University of California, Berkeley and Stanford University from 2005 to 2007. Dr. Sinopoli received his M.S. and Ph.D in Electrical Engineering at the University of California at Berkeley, in 2003 and 2005 respectively and his Laurea from the University of Padova in Italy. His research focuses on robust and resilient design of cyber-physical systems, networked and distributed control systems, distributed interference in networks, smart infrastructures, wireless sensor and actuator networks, cloud computing, adaptive video streaming applications, and energy systems.
\fi

\end{document}